\def\reals{{\mathbb R}}
\def\eps{{\varepsilon}}
\def\dd{{\sf dist}}
\def\fd{{\sf fd}}
\def\ww{{\bf w}}
\def\xx{{\bf x}}
\def\t{{\bf t}}
\def\ss{{\bf s}}
\def\gg{{\bf g}}
\def\mm{{\bf w}}
\def\xxi{{\bf \xi}}
\def\ff{{\bf f}}
\def\g{{\bf g}}
\def\s{{\bf s}}
\def\dist{{\sf dist}}
\title{General techniques for approximate incidences
and their application to the camera posing problem}
\titlerunning{Approximation Algorithms for Camera Posing}
\author{Dror Aiger}%
{Google, aigerd@google.com} {}%
{}%
{}
\author{Haim Kaplan}%
{School of Computer Science, Tel Aviv University, Tel~Aviv, and Google}%
{haimk@tau.ac.il}%
{}%
{Partially supported by ISF grant 1841/14, by grant 1367/2016
   from the German-Israeli Science Foundation (GIF), and by Blavatnik Research Fund in Computer Science at Tel Aviv University.}%
\author{Efi Kokiopoulou}%
{Google, efi@google.com} {}%
{}%
{}
\author{Micha Sharir}%
{School of Computer Science, Tel Aviv University, Tel~Aviv,
   Israel}%
{michas@tau.ac.il}%
{}%
{Partially supported by ISF Grant 260/18,  by grant 1367/2016
   from the German-Israeli Science Foundation (GIF), and by
   Blavatnik Research Fund in Computer Science at Tel Aviv University.}
\author{Bernhard Zeisl}%
{Google, bzeisl@google.com}%
{}%
{}%
{}
\authorrunning{D. Aiger, H. Kaplan, E. Kokiopoulou, M. Sharir, B. Zeisl}
\subjclass{F.2.2 Nonnumerical Algorithms and Problems}
\keywords{Camera positioning, Approximate incidences, Incidences}
\begin{document}

\maketitle

\begin{abstract}
We consider the classical camera pose estimation problem that arises in many 
computer vision applications, in which we are given $n$ 2D-3D correspondences 
between points in the scene and points in the camera image (some of which are 
incorrect associations), and where we aim to determine the camera pose 
(the position and orientation of the camera in the scene) from this data.
We demonstrate that this posing problem can be reduced to the problem of 
computing $\eps$-approximate incidences between two-dimensional surfaces 
(derived from the input correspondences) and points (on a grid) in a four-dimensional pose space.
Similar reductions can be applied to other camera pose problems, as well as
to similar problems in related application areas.

We describe and analyze three techniques for solving the resulting
$\eps$-approximate incidences problem in the context of our camera posing application.
The first is a straightforward assignment of surfaces to the cells of a grid
(of side-length $\eps$) that they intersect.
The second is a variant of a primal-dual technique, recently introduced by
a subset of the authors~\cite{aiger2017} for different (and simpler)
applications. The third is a non-trivial generalization of a data structure
Fonseca and Mount~\cite{fonseca2010}, originally designed for the case of hyperplanes. 
% The data structure approximates
% the surfaces by a set of canonical surfaces whose number depends only on $\eps$.
% It then recurses on the problem resolution, and at each step re-canonizes the
% relevant surfaces into a smaller number of coarser approximations.
% This method is particularly appealing for a large number of input surfaces, such that many of them get rounded to the same canonical surface during the recursion.
We present and analyze this technique in full generality, and then apply it to the camera posing problem at hand.

We compare our methods experimentally on real and synthetic data.
Our experiments show that for the typical values of $n$ and $\eps$,
the primal-dual method is the fastest, also in practice.
\end{abstract}

\section{Introduction} \label{sec:intro}
Camera pose estimation is a fundamental problem in computer vision, which aims
at determining the pose and orientation of a camera solely from an image.
This localization problem appears in many interesting real-world applications,
such as for the navigation of self-driving cars~\cite{haene2017},
in incremental environment mapping such as Structure-from-Motion (SfM)~\cite{agarwal2009,pollefeys2004,schonberger2016},
or for augmented reality~\cite{klein2007,middelberg2014,sweeney2015},
where a significant component are algorithms that aim to estimate
an accurate camera pose in the world from image data.

Given a three-dimensional point-cloud model of a scene, the classical, but also
state-of-the-art approach to absolute camera pose estimation consists of a two-step procedure.
First, one matches a large number of features in the two-dimensional camera image
with corresponding features in the three-dimensional scene.
Then one uses these putative correspondences to determine the pose and orientation of the camera.
Typically, the matches obtained in the first step contain many incorrect associations, forcing the second step
to use filtering techniques to reject incorrect matches.
Subsequently, the absolute 6 degrees-of-freedom (DoF) camera pose is estimated,
for example, with a perspective $n$-point pose solver~\cite{haralick1994} within a RANSAC scheme~\cite{fischler1981}.

In this work we concentrate on the second step of the camera pose problem. That is, we consider
the task of estimating the camera pose and orientation from a (potentially large) set of $n$
already calculated image-to-scene correspondences.

Further, we assume that we are given
% the camera is calibrated -- i.e., its intrinsic parameters are fixed -- and that 
a common direction between the world and camera frames. For example, inertial sensors, 
available on any smart-phone nowadays, allow to estimate the vertical gravity direction 
in the three-dimensional camera coordinate system. This
alignment of the vertical direction fixes two degrees of freedom for the rotation 
between the frames and we are left to estimate four degrees of freedom out of the general six.
To obtain four equations (in the four remaining degrees of freedom), this setup requires two 
pairs of image-to-scene correspondences\footnote{%
  As we will see later in detail, each correspondence imposes two constraints on the camera pose.} 
for a minimal solver. Hence a corresponding naive RANSAC-based scheme requires $O(n^2)$ 
filtering steps, where in each iterations a pose hypothesis based on a different pair 
of correspondences is computed and verified against all other correspondences.

Recently, Zeisl et al.~\cite{zeisl2015} proposed a Hough-voting inspired
outlier filtering and camera posing approach, which computes the camera pose
up to an accuracy of $\eps>0$ from a set of $2$D-$3$D correspondences, in $O(n/\eps^2)$ time, 
under the same alignment assumptions of the vertical direction.
In this paper we propose new algorithms that work considerably faster in practice, but under milder assumptions.
Our method is based on a reduction of the problem to a problem of counting
\emph{$\eps$-approximate incidences} between points and surfaces, where a point
$p$ is $\eps$-approximately incident (or just $\eps$-incident) to a surface $\sigma$
if the (suitably defined) distance between $p$ and $\sigma$ is at most $\eps$.
This notion has recently been introduced by a subset of the authors 
in~\cite{aiger2017}, and applied in a variety of instances, involving 
somewhat simpler scenarios than the one considered here.
Our approach enables us to compute a camera pose when the number of correspondences $n$ is large, and many of which are expected to be outliers.
In contrast, a direct application of RANSAC-based methods on such inputs is very slow, since the fraction of inliers is small.
%It takes many trials to find a pair of good correspondences if drawn at random, since the fraction of inliers is small.
In the limit, trying all pairs of matches involves $\Omega(n^2)$ RANSAC iterations.
Moreover, our methods enhance the quality of the posing considerably~\cite{zeisl2015}, 
since each generated candidate pose is close to (i.e., consistent with) with many of the correspondences.

\medskip
\noindent
{\bf Our results.}
We formalize the four degree-of-freedom camera pose problem as an approximate incidences problem in Section \ref{sec:spatial_pos}.
Each 2D-3D correspondence is represented as a two-dimensional surface in the 4-dimensional pose-space, which
is the locus of all possible positions and orientations of the camera that fit the correspondence exactly.
Ideally, we would like to find a point (a pose) that lies on as many surfaces as possible, but since
we expect the data to be noisy, and the exact problem is inefficient to solve anyway, we settle for 
an approximate version, in which we seek a point with a large number of approximate incidences with the surfaces. 

Formally, we solve the following problem. We have an error parameter $\eps>0$, we lay down a grid on
$[0,1]^d$ of side length $\eps$, and compute, for each vertex $v$ of the grid, a count $I(v)$ of
surfaces that are approximately incident to $v$, so that
(i) every surface that is $\eps$-incident to $v$ is counted in $I(v)$, and 
(ii) every surface that is counted in $I(v)$ is $\alpha\eps$-incident to $v$, for some small constant $\alpha>1$
(but not all $\alpha\eps$-incident surfaces are necessarily counted). We output the grid vertex $v$ with the
largest count $I(v)$ (or a list of vertices with the highest counts, if so desired).

As we will comment later, (a) restricting the algorithm to grid vertices only does not miss a good pose $v$:
a vertex of the grid cell containing $v$ serves as a good substitute for $v$, and (b) we have no real
control on the value of $I(v)$, which might be much larger than the number of surfaces that are $\eps$-incident
to $v$, but all the surfaces that we count are `good'---they are reasonably close to $v$. In the computer vision
application, and in many related applications, neither of these issues is significant.

We give three algorithms for this camera-pose approximate-incidences problem.
The first algorithm simply computes the grid cells
that each surface intersects, and considers the number of intersecting surfaces per cell as its approximate $\eps$-incidences count.
This method takes time $O\left(\frac{n}{\eps^2}\right)$ for all vertices of our $\eps$-size grid.
We then describe a faster algorithm using geometric duality, in Section \ref{sec:primal_dual}. 
It uses a coarser grid in the primal space and switches to a dual 5-dimensional space 
(a 5-tuple is needed to specify a $2$D-$3$D correspondence and its surface, now dualized to a point).
In the dual space each query (i.e., a vertex of the grid) becomes a 3-dimensional surface,
and each original 2-dimensional surface in the primal 4-dimensional space becomes a point.
This algorithm takes $O\left( \frac{n^{3/5}}{\eps^{14/5}} + n + \frac{1}{\eps^4} \right)$ time, 
and is asymptotically faster than the simple algorithm for $n > 1/\eps^2$.

Finally, we give a general method for constructing an approximate incidences data structure for general
$k$-dimensional algebraic surfaces (that satisfy certain mild conditions) in $\reals^d$,
in Section \ref{sec:fonseca}. It extends the technique of Fonseca and Mount~\cite{fonseca2010},
designed for the case of hyperplanes, and takes $O(n + {\rm poly}(1/\eps))$ time, where the degree of the polynomial
in $1/\eps$ depends on the number of parameters needed to specify a surface, the dimension of the surfaces, 
and the dimension of the ambient space. We first present and analyze this technique in full generality, and then apply it to
the surfaces obtained for our camera posing problem. In this case, the data structure requires
$O(n+ 1/\eps^6)$ storage and is constructed in roughly the same time.
This is asymptotically faster than our primal-dual scheme when $n\ge 1/\eps^{16/3}$ 
(for $n\ge 1/\eps^7$ the $O(n)$ term dominates and these two methods are asymptotically the same).
Due to its generality, the latter technique is easily adapted to other surfaces and thus is of general interest and potential.
In contrast, the primal-dual method requires nontrivial adaptation as it switches from one 
approximate-incidences problem to another and the dual space and its distance function 
depend on the type of the input surfaces.

We implemented our algorithms and compared their performance on real and synthetic data.
Our experimentation shows that, for commonly used values of $n$ and $\eps$ in 
practical scenarios ($n \in [8K, 32K]$, $\eps \in [0.02, 0.03]$),
the primal-dual scheme is considerably faster than the other algorithms, and should thus be the method of choice. 
Due to lack of space, the experimentation details are omitted in this version, with the exception of a few highlights.
They can be found in the appendix.
% These results are presented in Section~\ref{sec:experiments}.

\section{From camera positioning to approximate incidences} \label{sec:spatial_pos}

Suppose we are given a pre-computed three-dimensional scene and a two-dimensional
picture of it. Our goal is to deduce from this image the location and orientation
of the camera in the scene. In general, the camera, as a rigid body in 3-space,
has six degrees of freedom, three of translation and three of rotation (commonly
referred to as the \emph{yaw, pitch and roll}).
We simplify the problem by making the realistic assumption, that the vertical direction 
of the scene is known in the camera coordinate frame (e.g., estimated by en inertial sensor on smart phones).
This allows us to rotate the camera coordinate frame such that its $z$-axis is parallel to the world $z$-axis,
thereby fixing the pitch and roll of the camera and leaving only four degrees of freedom
$(x,y,z,\theta)$, where $c=(x,y,z)$ is the location of the camera center, say,
and $\theta$ is its yaw, i.e.\ horizontal the orientation of the optical axis around the vertical direction.
See Figure \ref{fig:yaw-pitch}.

\begin{figure}[h]
\centering
\begin{minipage}{0.9\linewidth}
    \includegraphics[width=\linewidth]{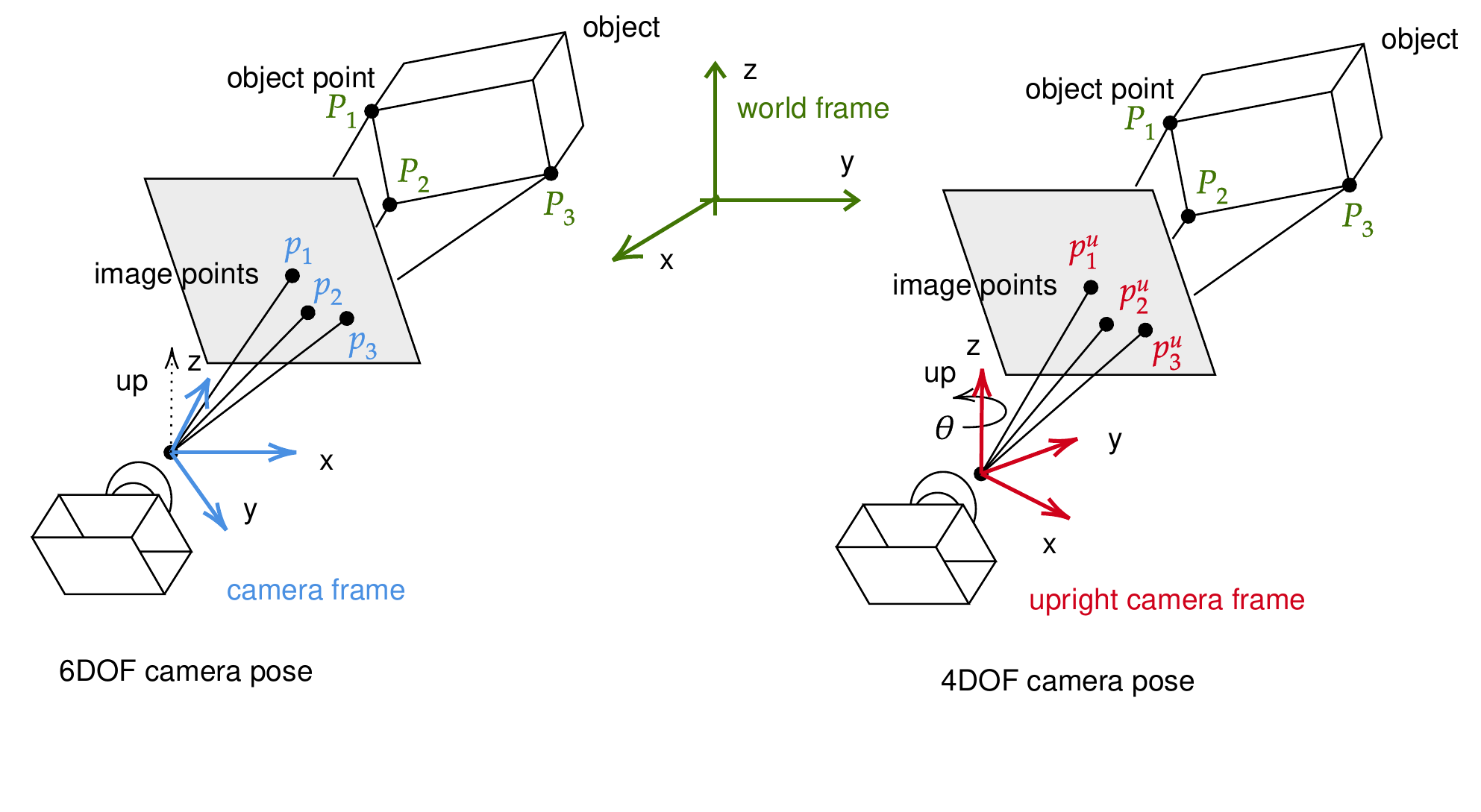}
    \caption{With the knowledge of a common vertical direction
    between the camera and world frame the general 6DoF camera posing problem reduces
    to estimating 4 parameters.
    This is the setup we consider in our work. 
    }
    \label{fig:yaw-pitch}
\end{minipage}
\end{figure}

By preprocessing the scene, we record the spatial coordinates $w=(w_1,w_2,w_3)$ of a discrete (large) set of salient
points. We assume that some (ideally a large number) of the distinguished points are identified in the camera image,
resulting in a set of image-to-scene correspondences.
Each correspondence $\mm = \{w_1, w_2, w_3, \xi, \eta\}$ is parameterized by five parameters,
the spatial position $w$ and the position $v=(\xi,\eta)$ in the camera plane of view of the same salient point.
Our goal is to find a camera pose $(x,y,z,\theta)$ so that as many correspondences
as possible are (approximately) \emph{consistent} with it, i.e., the ray from the camera 
center $c$ to $w$ goes approximately through $(\xi, \eta)$ in the image plane, when the yaw of the camera is $\theta$.

\subsection{Camera posing as an $\eps$-incidences problem}

% In the following we further develop and formalize the aforementioned  connection between the camera posing problem
% and the problem of computing approximate incidences, where the latter is a variant of the
% problems considered in the earlier work~\cite{aiger2017}.
Each correspondence and its 5-tuple $\mm$ define a two-dimensional surface $\sigma_\mm$
in parametric 4-space, which is the locus of all poses $(x,y,z,\theta)$ of the camera
at which it sees $w$ at coordinates $(\xi,\eta)$ in its image.
For $n$ correspondences, we have a set of $n$ such surfaces.
We prove that each point in the parametric $4$-space of camera poses that is
close to a surface $\sigma_\mm$, in a suitable metric defined in that $4$-space, represents a camera pose where $w$ is
projected to a point in the camera viewing plane that is close to $(\xi,\eta)$, and vice versa
(see Section~\ref{sec:sigma} for the actual expressions for these projections).
Therefore, a point in $4$-space that is close to a large number of surfaces
represents a camera pose with many approximately consistent correspondences, which is a
strong indication of being close to the correct pose.

Extending the notation used in the earlier work~\cite{aiger2017},
we say that a point $q$ is \emph{$\eps$-incident} to a surface $\sigma$ if $\dd(q,\sigma) \le \eps$.
Our algorithms approximate, for each vertex of a grid $G^\eps$ of side length $\eps$, the number of
$\eps$-incident surfaces and suggest the vertex with the largest count as the best candidate for the camera pose.
% In~\citep{aiger2017} we introduced efficient algorithms to compute approximate
% incidences between a set of points in two or three dimensions and a set of certain types
% of curves or surfaces, such as lines or circles in the plane, and lines, circles, planes or spheres in $3$-space. 
This work extends the approximate incidences methodology in~\cite{aiger2017} 
to the (considerably more involved) case at hand.

\subsection{The surfaces $\sigma_\mm$} \label{sec:sigma}

Let $w=(w_1,w_2,w_3)$ be a salient point in $\reals^3$, and assume that the camera
is positioned at $(c,\theta) = (x,y,z,\theta)$. We represent the orientation of the vector $w-c$,
within the world frame, by its spherical coordinates
$(\varphi,\psi)$, except that, unlike the standard convention, we take $\psi$ to be the
angle with the $xy$-plane (rather than with the $z$-axis):
\[
\tan\psi  = \frac{w_3-z}{\sqrt{(w_1-x)^2+(w_2-y)^2}} \quad\quad\quad
\tan\varphi  = \frac{w_2-y}{w_1-x}
\]
In the two-dimensional frame of the camera the $(\xi,\eta)$-coordinates model the \emph{view} of $w$, which differs from above polar representation of the vector $w-c$ only by the polar orientation $\theta$ of the viewing plane itself. Writing $\kappa$ for $\tan\theta$, we have
\begin{equation} 
\label{cweqs}
\xi = \tan(\varphi-\theta) = \frac{\tan\varphi - \tan\theta}{1 + \tan\varphi \tan\theta} 
= \frac{(w_2-y) - \kappa (w_1-x)}{(w_1-x) + \kappa (w_2-y)} ,
\end{equation}
\[
\eta = \tan\psi = \frac{w_3-z}{\sqrt{(w_1-x)^2+(w_2-y)^2}} .
\]
We note that using $\tan\theta$ does not distinguish between $\theta$ and $\theta+\pi$,
but we will restrict $\theta$ to lie in $[-\pi/4,\pi/4]$ or in similar narrower
ranges, thereby resolving this issue.

We use $\reals^4$ with coordinates $(x,y,z,\kappa)$ as our primal space, where each point models
a possible pose of the camera. Each correspondence $\ww$ is parameterized by the triple
$(w,\xi,\eta)$, and defines a two-dimensional algebraic surface $\sigma_\ww$ of degree at most $4$, 
whose equations (in $x,y,z,\kappa$)
are given in \eqref{cweqs}. It is the locus of all camera poses $v=(x,y,z,\kappa)$
at which it sees $w$ at image coordinates $(\xi,\eta)$. 
% The surface $\sigma_\ww$
% is two-dimensional, being the intersection of the
% two quadratic hypersurfaces defined by the equations in \eqref{cweqs}, and is therefore of degree $4$.
We can rewrite these equations into the following parametric representation of $\sigma_\ww$,
expressing $z$ and $\kappa$ as functions of $x$ and $y$:
\begin{equation} \label{cparw}
\kappa  = \frac{(w_2-y) - \xi (w_1-x)} {(w_1-x) + \xi (w_2-y)}  \quad\quad\quad
z  = w_3 - \eta \sqrt{(w_1-x)^2+(w_2-y)^2} .
\end{equation}
% We use \eqref{cparw} rather than \eqref{cweqs} as the parametric equations defining the surface $\sigma_\ww$.
% We use the following notation where, 
For a camera pose $v=(x,y,z,\kappa)$, and a point $w=(w_1,w_2,w_3)$, we write
\begin{equation} \label{eq:fg}
F(v;w)  =
\frac{(w_2-y) - \kappa (w_1-x)} {(w_1-x) + \kappa (w_2-y)}  \quad\quad\quad
G(v;w)  =
\frac{w_3-z}{\sqrt{(w_1-x)^2+(w_2-y)^2}}.
\end{equation}

In this notation we can write the Equations \eqref{cweqs} characterizing $\sigma_\ww$
(when regarded as equations in $v$) as
$\xi  = F(v;w)$ and $\eta  = G(v;w)$.

\subsection{Measuring proximity}

Given a guessed pose $v=(x,y,z,\kappa)$ of the camera, we want to measure how
well it fits the scene that the camera sees. For this, given a correspondence
$\ww=(w,\xi,\eta)$, we define the \emph{frame distance} $\fd$
between $v$ and $\ww$ as the $L_\infty$-distance between $(\xi,\eta)$ and
$(\xi_v,\eta_v)$, where, as in Eq.~\eqref{eq:fg},
$ \xi_v  = F(v;w)$,
$\eta_v  = G(v;w)$.
That is,
\begin{equation} \label{eq:fd}
\fd(v,\ww) = \max \left\{ |\xi_v-\xi|,\; |\eta_v-\eta| \right\} .
\end{equation}
Note that $(\xi_v,\eta_v)$ are the coordinates at which the camera would see $w$ if
it were placed at position $v$, so the frame distance is the $L_\infty$-distance between
these coordinates and the actual coordinates $(\xi,\eta)$ at which the camera sees $w$;
this serves as a natural measure of how close $v$ is to the actual pose of the camera.

We are given a viewed scene of $n$ distinguished points (correspondences)
$\ww=(w,\xi,\eta)$. Let $S$ denote the set of $n$ surfaces $\sigma_\ww$,
representing these correspondences.
We assume that the salient features $w$ and the camera are all located within some bounded
region, say $[0,1]^3$.
The replacement of $\theta$ by $\kappa=\tan\theta$ makes
its range unbounded, so we break the problem into four subproblems, in each of which $\theta$
is confined to some sector. In the first subproblem we assume that $-\pi/4\le\theta\le\pi/4$,
so $-1\le\kappa\le 1$. The other three subproblems involve the ranges
$[\pi/4,3\pi/4]$, $[3\pi/4,5\pi/4]$, and $[5\pi/4,7\pi/4]$. We only consider here the
first subproblem; the treatment of the others is fully analogous. In each such
range, replacing $\theta$ by $\tan\theta$ does not incur the ambiguity of
identifying $\theta$ with $\theta+\pi$.

Given an error parameter $\eps>0$, we seek an approximate pose $v$ of the camera,
at which many correspondences  $\ww$ are within frame distance at most $\eps$ from $v$,
as given in (\ref{eq:fd}).

The following two lemmas relate our frame distance to the Euclidean distance. Their (rather technical) proofs are given in the appendix.

%%%%%%%%%%%%%%%%%%%%%%%%%%%%
\begin{lemma} \label{xetau}
Let $v=(x,y,z,\kappa)$, and let $\sigma_\ww$ be the surface associated with a correspondence
$\ww=\{w_1,w_2,w_3,\xi,\eta \}$. Let $v'$ be a point on $\sigma_\ww$ such that $|v-v'| \le \eps$ (where $|\cdot |$ denotes the
Euclidean norm).
If

\noindent
(i) $\left| (w_1-x) + \kappa(w_2-y)\right| \ge a>0$, and

\noindent
(ii) $(w_1-x)^2 + (w_2-y)^2 \ge a>0$, for some absolute constant $a$,

\noindent
then $\fd(v,\ww) \le\beta\eps$ for some constant $\beta$ that depends on $a$.
\end{lemma}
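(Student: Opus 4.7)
The plan is to translate the statement into a Lipschitz-continuity claim for the functions $F(\cdot\,;w)$ and $G(\cdot\,;w)$ of Equation \eqref{eq:fg}. Because $v' \in \sigma_\ww$, the defining equations of the surface give $F(v';w) = \xi$ and $G(v';w) = \eta$. Thus
\[
\fd(v,\ww) = \max\{|F(v;w)-F(v';w)|,\;|G(v;w)-G(v';w)|\},
\]
and it is enough to show that $F$ and $G$, viewed as functions of $v=(x,y,z,\kappa)$, are Lipschitz at $v$ with a constant $\beta$ depending only on $a$ (and on the fact that the scene sits in $[0,1]^3$ and $|\kappa|\le 1$).

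First I would compute the partial derivatives of $F$ and $G$ with respect to $x,y,z,\kappa$. Writing $p=w_1-x$ and $q=w_2-y$, a short calculation gives $\partial_x F = \frac{(1+\kappa^2)q}{(p+\kappa q)^2}$, $\partial_y F = -\frac{(1+\kappa^2)p}{(p+\kappa q)^2}$, $\partial_\kappa F = -\frac{p^2+q^2}{(p+\kappa q)^2}$, $\partial_z F = 0$; and $\partial_z G = -\frac{1}{\sqrt{p^2+q^2}}$, while $\partial_x G$ and $\partial_y G$ have the form $\pm\frac{(w_3-z)\,p}{(p^2+q^2)^{3/2}}$ (resp.\ with $q$), and $\partial_\kappa G = 0$. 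Under assumptions (i) and (ii) and the bounding-box constraints, the numerators are $O(1)$ and the denominators are bounded below by $a^2$ (for $F$) and $a^{3/2}$ (for $G$), so each partial derivative is bounded by a constant depending only on $a$.

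The one subtlety is that to apply the mean value theorem along the segment from $v$ to $v'$, I need hypotheses (i) and (ii) to hold (perhaps with a constant-factor loss) at every intermediate point $v''$. Since $|v-v''|\le \eps$, the quantities $|p''+\kappa'' q'' - (p+\kappa q)|$ and $|(p'')^2+(q'')^2 - (p^2+q^2)|$ are each $O(\eps)$, so for $\eps \le c_0\, a$ with a suitable small absolute constant $c_0$ we retain $|p''+\kappa'' q''|\ge a/2$ and $(p'')^2+(q'')^2\ge a/2$ along the whole segment, and the derivative bounds carry over. Integrating, $|F(v;w)-F(v';w)|\le L_F\eps$ and $|G(v;w)-G(v';w)|\le L_G\eps$ with $L_F, L_G = O(1/a^2)$, so $\fd(v,\ww)\le \beta\eps$.

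The remaining case $\eps > c_0 a$ is handled trivially: one chooses $\beta$ large enough (e.g.\ any constant exceeding the global $L_\infty$-diameter of the image coordinate range divided by $c_0 a$) to make the conclusion hold vacuously. The main obstacle is thus not the gradient bound itself, which is a direct computation, but ensuring that the denominator bounds propagate along the whole segment from $v$ to $v'$; this is what ties the constant $\beta$ to $a$, and it forces the careful restriction to a regime where $\eps$ is small relative to $a$, while treating large $\eps$ by a trivial bound.
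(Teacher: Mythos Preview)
Your approach is essentially identical to the paper's: both compute the partial derivatives of $F$ and $G$ with respect to $(x,y,z,\kappa)$, bound them using conditions (i)--(ii) together with the bounded-domain assumptions, and conclude via the mean value inequality. You are in fact slightly more careful than the paper, which simply asserts the gradient bound and writes $|\xi_v-\xi|\le\beta|v-v'|$ without discussing whether (i)--(ii) persist at intermediate points of the segment; your split into the regime $\eps\le c_0 a$ (where the denominator bounds propagate with a factor-of-two loss) and the trivial large-$\eps$ case makes explicit what the paper leaves implicit.
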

%%%%%%%%%%%%%%%%%%%%%%%%%%%%

Informally, Condition (i) requires that
the absolute value of the $\xi = \tan (\varphi-\theta)$ 
coordinate of the position of $w$ in the viewing plane,
with the camera positioned at $v$, is not too large (i.e., that
$|(\varphi-\theta) |$ is not too close to $\pi/2$). We can ensure this property
by restricting the camera image to some suitably bounded $\xi$-range.

Similarly, Condition (ii) requires that the $xy$-projection of the vector $w-c$
is not too small. It can be violated in two scenarios. Either we look
at a data point that is too close to $c$, or we see it looking too much `upwards' or `downwards'.
We can ensure that the latter situation does not arise, by restricting the camera image, as in the preceding paragraph,
to some suitably bounded $\eta$-range too. That done, we ensure that the former situation does not arise by
requiring that the physical distance between $c$ and $w$ be at least some multiple of $a$.

The next lemma establishes the converse connection.

%%%%%%%%%%%%%%%%%%%%%%%
\begin{lemma} \label{reverse-primal}
Let $v=(x,y,z,\kappa)$ be a camera pose and $\ww=\{w_1,w_2,w_3,\xi,\eta \}$
a correspondence, such that $\fd(v,\ww) \le\eps$. Assume that
$\left| (w_1-x) + \xi(w_2-y)\right| \ge a>0$, for some absolute constant $a$,
and consider the point $v'=(x,y,z',\kappa')\in \sigma_\ww$ where (see Eq.~\eqref{cparw})
\[
z' = w_3 - \eta \sqrt{(w_1-x)^2+(w_2-y)^2}   \;\;\;\;\;\;\;\;\;\;
\kappa'  = \frac{(w_2-y) - \xi (w_1-x)} {(w_1-x) + \xi (w_2-y)} .
\]
Then $|z-z'| \le \sqrt{2}\eps$ and
$|\kappa - \kappa'| \le c\eps$, for some constant $c$, again depending on $a$.
\end{lemma}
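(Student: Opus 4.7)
The plan is to unpack the two coordinates of the frame distance $\fd(v,\ww)\le\eps$ separately and convert each into a Euclidean bound on one of the two coordinates $z,\kappa$ that $v$ and $v'$ differ in.

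For the $z$ bound: Set $A=w_1-x$, $B=w_2-y$. By definition $\eta_v = G(v;w) = (w_3-z)/\sqrt{A^2+B^2}$, and by the defining equation for $z'$ we have $\eta = (w_3-z')/\sqrt{A^2+B^2}$. Hence
\[
|\eta_v-\eta| \;=\; \frac{|z-z'|}{\sqrt{A^2+B^2}} \;\le\; \eps,
\]
so $|z-z'|\le \eps\sqrt{A^2+B^2}$. Since $w,c\in[0,1]^3$, both $|A|,|B|\le 1$, and therefore $|z-z'|\le\sqrt{2}\eps$.

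For the $\kappa$ bound: I first rewrite $\xi_v = F(v;w) = (B-\kappa A)/(A+\kappa B)$ and $\xi = (B-\kappa' A)/(A+\kappa' B)$ (the latter holds because $\kappa'$ is exactly the value of $\kappa$ that places $v'$ on $\sigma_\ww$ with image coordinate $\xi$). Putting them over a common denominator, the numerator simplifies nicely: a short expansion gives
\[
\xi_v-\xi \;=\; \frac{(\kappa'-\kappa)(A^2+B^2)}{(A+\kappa B)(A+\kappa' B)}.
\]
So I need a lower bound on $(A^2+B^2)/\bigl(|A+\kappa B|\,|A+\kappa' B|\bigr)$, which will let me invert the relation into the desired bound on $|\kappa-\kappa'|$.

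The main technical step, and the only mildly delicate one, is to bound the two factors $|A+\kappa B|$ and $|A+\kappa' B|$ below in terms of the hypothesized bound $|A+\xi B|\ge a$. The key observation is the algebraic identity
\[
A+\kappa' B \;=\; \frac{A^2+B^2}{A+\xi B},
\]
which follows directly from $\kappa' = (B-\xi A)/(A+\xi B)$; the analogous identity $A+\kappa B = (A^2+B^2)/(A+\xi_v B)$ holds because $\xi_v$ and $\kappa$ are related in the same way. Since $|\xi_v-\xi|\le\eps$ and $|B|\le 1$, $|A+\xi_v B|\ge a-\eps$, so for $\eps\le a/2$ both denominators on the right are in the range $[a/2,\,O(1)]$ (assuming $\xi$ is bounded, as forced by restricting the camera image to a valid field of view). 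Combined with $A^2+B^2 \ge (A+\xi B)^2/(1+\xi^2) \ge a^2/(1+\xi^2)$ from Cauchy-Schwarz, and the trivial upper bounds $|A+\kappa B|, |A+\kappa' B| \le (A^2+B^2)/(a/2) = O(1/a)$, solving the displayed identity for $\kappa'-\kappa$ yields
\[
|\kappa-\kappa'| \;\le\; \frac{|\xi_v-\xi|\,|A+\kappa B|\,|A+\kappa' B|}{A^2+B^2} \;\le\; c\eps
\]
with $c$ depending only on $a$ (and on the assumed bound on $|\xi|$). The hardest part of the write-up is simply keeping track of these constants and confirming that the $\xi$-dependence can be absorbed into the constant $c$ by the standing assumption that the camera image lies in a bounded angular range.
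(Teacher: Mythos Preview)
Your proof is correct and follows essentially the same outline as the paper's. The $z$-bound is identical: both you and the paper write $z-z' = (\eta-\eta_v)\sqrt{A^2+B^2}$ and use $A^2+B^2\le 2$.

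For the $\kappa$-bound the two arguments differ only cosmetically. The paper views $\kappa$ as a function $H(t) = (B-tA)/(A+tB)$ of $t$, so that $\kappa = H(\xi_v)$ and $\kappa' = H(\xi)$, and applies the mean value theorem with $H'(t) = -(A^2+B^2)/(A+tB)^2$; the hypothesis $|A+\xi B|\ge a$ (together with $|\xi_v-\xi|\le\eps$) bounds the denominator away from zero on $[\xi,\xi_v]$, giving $|H'|\le c$. You instead compute the exact finite difference $\xi_v-\xi = (\kappa'-\kappa)(A^2+B^2)/[(A+\kappa B)(A+\kappa'B)]$ and invert, using the reciprocal identity $A+\kappa'B = (A^2+B^2)/(A+\xi B)$ to transfer the $\xi$-side lower bound to the $\kappa$-side. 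These are the secant and tangent versions of the same estimate.

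One minor simplification: once you have the two identities $A+\kappa B = (A^2+B^2)/(A+\xi_v B)$ and $A+\kappa' B = (A^2+B^2)/(A+\xi B)$, the ratio you need collapses directly to
\[
\frac{|A+\kappa B|\,|A+\kappa' B|}{A^2+B^2} \;=\; \frac{A^2+B^2}{|A+\xi B|\,|A+\xi_v B|} \;\le\; \frac{2}{a(a-\eps)},
\]
so the Cauchy--Schwarz lower bound on $A^2+B^2$ is not needed (and in fact using it in the denominator alongside the $O(1/a)$ upper bounds on $|A+\kappa B|,|A+\kappa'B|$ costs you an extra factor of $1/a^2$ in the constant).
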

%%%%%%%%%%%%%%%%%%%%%%%%%%%%
Informally, the condition $\left| (w_1-x) + \xi(w_2-y)\right| \ge a>0$ means that
the orientation
of the camera, when it is positioned at $(x,y)$ and  sees $w$ at coordinate $\xi$
of the viewing plane is not
too close to $\pm \pi/2$.
 This is a somewhat artificial constraint that is satisfied by our restriction on the allowed yaws of the camera (the range of $\kappa$).

%\noindent
\medskip
\noindent{\bf A Simple algorithm.}
Using Lemma \ref{reverse-primal} and Lemma \ref{xetau} we can derive a
simple naive solution which does not require any of the sophisticated machinery developed in this work.
We construct a grid $G$ over $Q=[0,1]^3\times[-1,1]$, of cells $\tau$, each of
dimensions $\eps \times \eps \times 2\sqrt{2}\eps \times 2a\eps$,
where $a$ is the constant of Lemma \ref{reverse-primal}. We use this non-square grid $G$ since we want to find $\eps$-approximate incidences in
terms of frame distance.
For each  cell $\tau$ of $G$ we compute the number of surfaces $\sigma_\ww$ that intersect $\tau$.
This gives an approximate incidences count for the center of $\tau$.
Further details and a precise statement can be found in the appendix.

%%%%%%%%%%%%%%%%%%%%%%%%%%%%%%%%%%%%%%%%%%%%%%%%
\section{Primal-dual algorithm for geometric proximity} \label{sec:primal_dual}

Following the general approach in \cite{aiger2017}, we use a suitable duality,
with some care. We write $\eps = 2\gamma\delta_1\delta_2$, for suitable parameters
$\gamma$, and $\eps/(2\gamma)\le \delta_1,\,\delta_2 \le 1$, whose concrete values are
fixed later, and apply the decomposition scheme developed in \cite{aiger2017}
tailored to the case at hand. Specifically, we consider the coarser grid $G_{\delta_1}$
in the primal space, of cell dimensions
$\delta_1 \times \delta_1 \times \sqrt{2}\delta_1 \times c\delta_1$, where $c$ is 
is the constant from Lemma~\ref{reverse-primal}, that tiles up the domain $Q=[0,1]^3\times[-1,1]$  of possible camera positions.
For each cell $\tau$ of $G_{\delta_1}$, let $S_\tau$ denote the set of surfaces that cross
either $\tau$ or one of the eight cells adjacent to $\tau$ in the $(z,\kappa)$-directions.\footnote{%
  The choice of $z$, $\kappa$ is arbitrary, but it is natural for the analysis, given
  in the appendix.}
The duality is illustrated in Figure \ref{fig:primal_dual}.

\begin{figure}[h]
\centering
\includegraphics[width=0.6\linewidth]{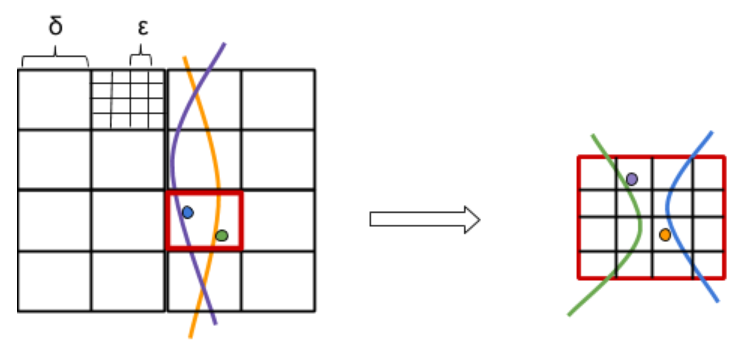}
\caption{A schematic illustration of our duality-based algorithm.}
\label{fig:primal_dual}
\end{figure}

We discretize the set of all possible positions of the camera by
the vertices of the finer grid $G_\eps$, defined as $G_{\delta_1}$, with $\eps$ replacing $\delta_1$,  that tiles up $Q$.
The number of these candidate positions is $m:=O(1/\eps^4)$. For each
vertex  $q\in G_\eps$, we want to approximate the number of surfaces that are
$\eps$-incident to $q$, and output the vertex with the largest count as the best candidate for the position of the camera.
% (Note that if $p$ is the real position of the camera then each surface that is $\eps$-close to $p$
% is counted as an $\eps$-incidence of one of the vertices of the cell of $G_\eps$ containing $p$.)
%
Let $V_\tau$ be the subset of $G_\eps$ contained in $\tau$. We ensure that the boxes of
$G_{\delta_1}$ are pairwise disjoint by making them half open, in the sense that
if $(x_0,y_0,z_0,\kappa_0)$ is the vertex of a box that has the smallest coordinates,
then the box is defined by
$x_0  \le x < x_0 + \delta_1$,
$y_0  \le y < y_0 + \delta_1$,
$z_0  \le z < z_0 + \sqrt{2}\delta_1$,
$\kappa_0  \le \kappa < \kappa_0 + c\delta_1$.
This makes the sets $V_\tau$ pairwise disjoint as well.
Put $m_\tau = |V_\tau|$ and $n_\tau = |S_\tau|$. We have
$m_\tau = O\left( (\delta_1/\eps)^4 \right)$ for each $\tau$.
Since the surfaces $\sigma_\ww$ are two-dimensional algebraic surfaces of constant degree, each of them crosses $O(1/\delta_1^2)$ cells of $G_{\delta_1}$, so we have
$\sum_\tau n_\tau = O(n/\delta_1^2)$.

We now pass to the dual five-dimensional space. Each point in that space
represents a correspondence $\ww=(w_1,w_2,w_3,\xi,\eta)$. We use the first
three components $(w_1,w_2,w_3)$ as the first three coordinates, but
modify the $\xi$- and $\eta$-coordinates in a manner that depends on
the primal cell $\tau$.
Let $c_\tau=(x_\tau,y_\tau,z_\tau,\kappa_\tau)$ be the midpoint of the primal box
$\tau$. For each $\sigma_\ww\in S_\tau$ we map $\ww=(w,\xi,\eta)$, where
$w=(w_1,w_2,w_3)$, to the point $\ww_\tau = (w_1,w_2,w_3,\xi_\tau,\eta_\tau)$, where
$\xi_\tau  = \xi - F(c_\tau;w)$ and
$\eta_\tau  = \eta - G(c_\tau;w)$,
with $F$ and $G$ as given in \eqref{eq:fg}.
We have
%%%%%%%%%%%%%%%%%%%%%%%%%%%%
\begin{corollary} \label{xetau-corr}
If $\sigma_\ww$ crosses $\tau$ then
$|\xi_\tau|,\;|\eta_\tau| \le\gamma\delta_1$, for some absolute constant $\gamma$,
provided that the following two properties hold, for some absolute constant $a>0$ (the constant $\gamma$ depends on $a$).

\noindent
(i) $\left| (w_1-x_\tau) + \kappa_\tau(w_2-y_\tau)\right| \ge a$, and

\noindent
(ii) $(w_1-x_\tau)^2 + (w_2-y_\tau)^2 \ge a$,
where $(x_\tau,y_\tau)$ are the $(x,y)$-coordinates of the center of $\tau$.
\end{corollary}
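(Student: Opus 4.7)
The plan is to show that the corollary follows as a fairly direct application of Lemma~\ref{xetau}, with the role of the arbitrary point $v$ played by the midpoint $c_\tau$ and the role of $v'$ played by a point where $\sigma_\ww$ meets $\tau$. The key observation is that, by the definitions of $\xi_\tau$ and $\eta_\tau$ and of the frame distance,
\begin{equation*}
\max\{|\xi_\tau|,\,|\eta_\tau|\} = \max\{|\xi - F(c_\tau;w)|,\,|\eta - G(c_\tau;w)|\} = \fd(c_\tau, \ww),
\end{equation*}
so bounding $|\xi_\tau|$ and $|\eta_\tau|$ amounts to bounding $\fd(c_\tau,\ww)$.

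First, since by hypothesis $\sigma_\ww$ crosses $\tau$, pick any $v' \in \sigma_\ww \cap \tau$. Because $\tau$ has dimensions $\delta_1 \times \delta_1 \times \sqrt{2}\delta_1 \times c\delta_1$, its half-diameter is bounded by a constant multiple of $\delta_1$, so $|c_\tau - v'| \le C\delta_1$ for some absolute constant $C$ (depending only on the fixed aspect ratios of the box). Second, the hypotheses (i) and (ii) of the corollary are formulated precisely for the point $c_\tau$, which is exactly the form that the hypotheses of Lemma~\ref{xetau} take for the ``query point'' $v = c_\tau$.

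Applying Lemma~\ref{xetau} to $v := c_\tau$ and this $v' \in \sigma_\ww$ with the role of $\eps$ replaced by $C\delta_1$, we obtain $\fd(c_\tau,\ww) \le \beta \cdot C\delta_1$, where $\beta$ is the constant (depending on $a$) supplied by the lemma. Setting $\gamma := \beta C$ yields the desired bound $\max\{|\xi_\tau|,|\eta_\tau|\} \le \gamma \delta_1$. Note that $\gamma$ depends only on $a$ and on the fixed aspect ratios of $\tau$, hence it is indeed an absolute constant once $a$ is fixed.

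There is no real obstacle here beyond bookkeeping of constants: the corollary is essentially a local repackaging of Lemma~\ref{xetau}, translated from the Euclidean-to-frame-distance statement into a statement about the dual coordinates $(\xi_\tau,\eta_\tau)$. The only mild subtlety is to verify that the conditions of the lemma are being invoked at the right point---namely at $c_\tau$ rather than at $v'$---which matches exactly the way conditions (i) and (ii) are phrased in the corollary.
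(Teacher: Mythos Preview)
Your proposal is correct and follows essentially the same approach as the paper's own proof: pick $v'\in\sigma_\ww\cap\tau$, bound $|c_\tau-v'|$ by a constant times $\delta_1$ using the box dimensions, and then apply Lemma~\ref{xetau} with $v=c_\tau$, noting (as the paper does by ``recalling~(\ref{eq:fd})'') that $\fd(c_\tau,\ww)=\max\{|\xi_\tau|,|\eta_\tau|\}$. The only cosmetic difference is that the paper phrases the first step for $\sigma_\ww\in S_\tau$ (allowing the surface to cross a neighboring cell as well), which yields the same constant-times-$\delta_1$ bound on $|v'-c_\tau|$.
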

%\noindent{\bf Proof.}
\begin{proof}
If $\sigma_\ww \in S_\tau$ then it contains a point $v'$ such that $|v'-c_\tau| \le c'\delta_1$,
for a suitable absolute constant $c'$ (that depends on $c$). We now apply Lemma \ref{xetau},
recalling (\ref{eq:fd}).
\end{proof}
%%%%%%%%%%%%%%%%%%%%%%%%%%%%

\medskip
We take the $\gamma$ provided by Corollary \ref{xetau-corr} as the $\gamma$ in the definition
of $\delta_1$ and $\delta_2$.
We map each point $v\in V_\tau$ to the dual surface
$\sigma^*_v = \sigma^*_{v;\tau} = \{\ww_\tau \mid v\in\sigma_\ww\}$.
Using \eqref{eq:fg}, we have
\[
\sigma^*_{v;\tau} = \{ ( w,\; F(v;w) - F(c_\tau;w), G(v;w) - G(c_\tau;w) ) \mid w=(w_1,w_2,w_2)\in[0,1]^3 \} .
\]
By Corollary~\ref{xetau-corr}, the points $\ww_\tau$, for the surfaces $\sigma_\ww$ that cross
$\tau$, lie in the region $R_\tau = [0,1]^3\times [-\gamma\delta_1,\gamma\delta_1]^2$.
We partition $R_\tau$ into a grid $G_{\delta_2}$ of $1/\delta_2^5$ small congruent boxes, each of dimensions
$
\delta_2 \times \delta_2 \times \delta_2 \times (2\gamma\delta_1\delta_2) \times (2\gamma\delta_1\delta_2) =
\delta_2 \times \delta_2 \times \delta_2 \times \eps \times \eps$.

Exactly as in the primal setup, we make each of these boxes half-open, 
% so that if the vertex
% with smallest coordinates of a box is $(w_1^0,w_2^0,w_3^0,\xi_\tau^0,\eta_\tau^0)$,
% then the box is given by
% $w_1^0  \le w_1 < w_1^0+\delta_2$,
% $w_2^0  \le w_2 < w_2^0+\delta_2$,
% $w_3^0  \le w_3 < w_3^0+\delta_2$,
% $\xi_\tau^0  \le \xi_\tau < \xi_\tau^0+\eps$,
% $\eta_\tau^0  \le \eta_\tau < \eta_\tau^0+\eps$,
thereby making the sets of dual vertices in the smaller boxes pairwise disjoint.
We assign to each of these dual cells $\tau^*$ the set $S^*_{\tau^*}$ of dual points
that lie in $\tau^*$, and the set $V^*_{\tau^*}$ of the dual surfaces that cross either
$\tau^*$ or one of the eight cells adjacent to $\tau^*$ in the $(\xi_\tau, \eta_\tau)$-directions.
%
%\looseness=-1
Put $n_{\tau^*} = |S^*_{\tau^*}|$ and $m_{\tau^*} = |V^*_{\tau^*}|$.
Since the dual cells are pairwise disjoint, we have $\sum_{\tau^*} n_{\tau^*} = n_\tau$.
Since the dual surfaces are three-dimensional algebraic surfaces of constant degree, each of them crosses $O(1/\delta_2^3)$ grid cells, so
$\sum_{\tau^*} m_{\tau^*} = O\left( m_\tau / \delta_2^3 \right)$.

We compute, for each dual surface $\sigma^*_v$, the sum $\sum_{\tau^*} |S^*_{\tau^*}|$,
over the dual cells $\tau^*$ that are either crossed by $\sigma^*_v$ or that one of
their adjacent cells in the $(\xi_\tau, \eta_\tau)$-directions is crossed by $\sigma^*_v$.
We output the vertex $v$ of $G_\eps$ with the largest resulting count, over all primal cells $\tau$.

The following theorem  establishes the correctness of our technique. Its proof is given in Appendix B.

%%%%%%%%%%%%%%%%%%%%%%%
\begin{theorem} \label{correct}
Suppose that for every cell $\tau\in G_{\delta_1}$ and for every point
$v=(x,y,z,\kappa) \in V_\tau$ and every $\ww = ((w_1,w_2,w_3),\xi,\eta)$ such that
$\sigma_\ww$ intersects either $\tau$ or one of its adjacent cells in the
$(\xi_\tau, \eta_\tau)$-directions, we have that, for some absolute constant $a > 0$,

\noindent
(i) $\left| (w_1-x) + \kappa(w_2-y)\right| \ge a$,

\noindent
(ii) $(w_1-x)^2 + (w_2-y)^2 \ge a$, and

\noindent
(iii) $\left| (w_1-x) + \xi(w_2-y)\right| \ge a$.

\noindent
Then (a) For each $v\in V$, every pair $(v,\ww)$ at frame distance $\le\eps$ is counted
(as an $\eps$-incidence of $v$) by the algorithm.
(b) For each $v\in V$, every pair $(v,\ww)$ that we count lies at frame distance $\le\alpha\eps$,
for some constant $\alpha>0$ depending on $a$.
\end{theorem}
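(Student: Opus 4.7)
The plan is to verify the two assertions separately by tracing the frame distance through the primal--dual pipeline, relying on Lemmas~\ref{xetau} and~\ref{reverse-primal} together with hypotheses (i)--(iii). For completeness (a), fix $v\in V_\tau$ and $\ww=((w_1,w_2,w_3),\xi,\eta)$ with $\fd(v,\ww)\le\eps$. Apply Lemma~\ref{reverse-primal} (enabled by condition (iii)) to produce a point $v'=(x,y,z',\kappa')\in\sigma_\ww$ with $|z-z'|\le\sqrt{2}\eps$ and $|\kappa-\kappa'|\le c\eps$. Since $v$ and $v'$ share their $(x,y)$-coordinates, $\sigma_\ww$ meets $\tau$ or one of its $(z,\kappa)$-adjacent cells (whose $(z,\kappa)$-sides are $\sqrt{2}\delta_1$ and $c\delta_1$, so $\eps$ is negligible in comparison), hence $\sigma_\ww\in S_\tau$ and $\ww_\tau$ is well-defined. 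Plugging $w$ into the parameterization of $\sigma^*_{v;\tau}$ yields a point with $(\xi_\tau,\eta_\tau)$-coordinates $(F(v;w)-F(c_\tau;w),\,G(v;w)-G(c_\tau;w))$, which differ from those of $\ww_\tau = (w,\xi-F(c_\tau;w),\eta-G(c_\tau;w))$ by exactly $(\xi-F(v;w),\,\eta-G(v;w))$, each at most $\eps$ in absolute value by hypothesis. Since the dual cells have side $\eps$ in the $(\xi_\tau,\eta_\tau)$-directions, $\sigma^*_{v;\tau}$ crosses either the dual cell $\tau^*\ni\ww_\tau$ or a $(\xi_\tau,\eta_\tau)$-neighbor of it, and the pair is counted.

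For soundness (b), suppose the algorithm counts $(v,\ww)$ via a primal cell $\tau\ni v$ and a dual cell $\tau^*\ni\ww_\tau$ such that $\sigma^*_{v;\tau}$ crosses $\tau^*$ or one of its $(\xi_\tau,\eta_\tau)$-neighbors. Because neighboring dual cells share the same $w$-projection, there is some $w'$ with $|w-w'|=O(\delta_2)$ and
\[
\bigl|\,\xi - F(c_\tau;w) - \bigl(F(v;w')-F(c_\tau;w')\bigr)\bigr| = O(\eps),
\]
and analogously for $G$. Writing $\Phi(w) := F(v;w)-F(c_\tau;w)$, we decompose
\[
\xi - F(v;w) \;=\; \bigl[\xi - F(c_\tau;w) - (F(v;w')-F(c_\tau;w'))\bigr] \;+\; \bigl[\Phi(w')-\Phi(w)\bigr].
\]
The first bracket is $O(\eps)$. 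For the second, the Fundamental Theorem of Calculus gives $\Phi(w) = \int_0^1 \nabla_v F(c_\tau + t(v-c_\tau);w)\cdot(v-c_\tau)\,dt$, so
\[
|\Phi(w')-\Phi(w)| \;\le\; \|\nabla_w\nabla_v F\|_\infty\cdot|v-c_\tau|\cdot|w-w'| \;=\; O(\delta_1)\cdot O(\delta_2) \;=\; O(\eps/\gamma).
\]
An identical computation for $G$ yields $|\eta-G(v;w)|=O(\eps)$, hence $\fd(v,\ww)\le\alpha\eps$ for a suitable constant $\alpha=\alpha(a,\gamma)$.

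The main technical obstacle is establishing the uniform bounds $\|\nabla_w\nabla_v F\|_\infty, \|\nabla_w\nabla_v G\|_\infty = O(1)$ on the relevant region. Differentiating the explicit formulas in~\eqref{eq:fg} produces rational expressions whose denominators are powers of $(w_1-x)+\kappa(w_2-y)$ (for $F$) and of $(w_1-x)^2+(w_2-y)^2$ (for $G$). Conditions (i) and (ii) are precisely what is needed to keep these denominators bounded away from zero along the straight-line segments traversed by the integral representation of $\Phi$, yielding the required $O(1)$ Lipschitz constants with implicit constants depending on $a$. Once these mixed-partial bounds are in hand, the remainder of the proof is routine bookkeeping to aggregate the various constants into the final $\alpha$.
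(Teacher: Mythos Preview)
Your argument is correct and follows essentially the same route as the paper's proof: part~(a) via Lemma~\ref{reverse-primal} to place $\sigma_\ww$ in $S_\tau$ and then a direct comparison of the $(\xi_\tau,\eta_\tau)$-coordinates, and part~(b) via the same decomposition $\xi-F(v;w)=[\xi_\tau-\xi'_\tau]+[\Phi(w')-\Phi(w)]$ with the second bracket controlled by the mixed Hessian $\nabla_w\nabla_v F$ (and likewise for $G$), yielding $O(\delta_1\delta_2)=O(\eps)$. The only cosmetic difference is that the paper phrases the mixed-Hessian bound as a second-order Taylor expansion in $(v,w)$ jointly, whereas you use the integral (FTC) representation of $\Phi$ followed by a Lipschitz estimate in $w$; these are equivalent formulations of the same estimate.
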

%%%%%%%%%%%%%%%%%%%%%%%

\subsection{Running time analysis}
The cost of the algorithm is clearly proportional to
$
\sum_\tau \sum_{\tau^*} \left( m_{\tau^*} + n_{\tau^*} \right) ,
$
over all primal cells $\tau$ and the dual cells $\tau^*$ associated with each cell $\tau$. We have
\[
\sum_\tau \sum_{\tau^*} \left( m_{\tau^*} + n_{\tau^*} \right) =
O\left( \sum_\tau \left( m_\tau / \delta_2^3 + n_\tau \right) \right) =
O\left( m / \delta_2^3 + n / \delta_1^2 \right) .
\]
Optimizing the choice of $\delta_1$ and $\delta_2$, we choose
$\delta_1  = \left( \frac{\eps^3n}{m}\right)^{1/5}$ and
$\delta_2  = \left( \frac{\eps^2m}{n}\right)^{1/5}$.
These choices make sense as long as each of $\delta_1$, $\delta_2$ lies between $\eps/(2\gamma)$ and $1$.
That is, $\frac{\eps}{2\gamma}  \le \left( \frac{\eps^3n}{m}\right)^{1/5} \le 1$ and
$\frac{\eps}{2\gamma}  \le \left( \frac{\eps^2m}{n}\right)^{1/5} \le 1$, or
% Rearranging, both constraints translate to
$c'\eps^2 m \le n \le \frac{c''m}{\eps^3}$,
where $c'$ and $c''$ are absolute constants (that depend on $\gamma$).

If $n < c'\eps^2 m$, we use only the primal setup, taking $\delta_1 = \eps$ (for the primal subdivision). The cost is then
$
O\left( n/\eps^2 + m \right) = O\left( m \right) .
$
Similarly, if ${\displaystyle n >  \frac{c''m}{\eps^3} }$, we use only the dual setup,
taking $\delta_1 = 1$ and $\delta_2 = \eps/(2\gamma)$, and the cost is thus
$
O\left( n + m/\eps^3 \right) = O(n) .
$
Adding everything together, to cover all three subranges, the running time is then
$
O\left( \frac {m^{2/5}n^{3/5}}{\eps^{6/5}} + n + m \right) .
$
Substituting $m = O\left( 1/\eps^4 \right)$, we get a running time of
$
O\left( \frac {n^{3/5}}{\eps^{14/5}} + n + \frac{1}{\eps^4} \right) .
$
The first term dominates when $n=\Omega(\frac{1}{\eps^2})$ and $n=O(\frac{1}{\eps^7})$ .
In conclusion, we have the following result.
%%%%%%%%%%%%%%%%%%%%%%%%%%%%%
\begin{theorem} \label{main3}
Given $n$ data points that are seen (and identified) in a two-dimensional
image taken by a vertically positioned camera, and an error parameter $\eps>0$,
where the viewed points satisfy the assumptions made in Theorem~\ref{correct}, we can compute, in
${\displaystyle O\left( \frac {n^{3/5}}{\eps^{14/5}} + n + \frac{1}{\eps^4} \right)}$
time, a vertex $v$ of $G_\eps$ that maximizes the approximate count of $\eps$-incident 
correspondences, where ``approximate'' means that
every correspondence ${\bf w}$ whose surface ${\bf \sigma_w}$ is at frame distance at most $\eps$ from $v$ is counted and
every correspondence that we count lies at frame distance at most $\alpha\eps$ from $v$, for some fixed constant $\alpha$.
\end{theorem}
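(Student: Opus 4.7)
The plan is to prove this theorem as a direct consolidation of Theorem \ref{correct} (which handles correctness) with the running-time analysis laid out immediately before the statement. First I would dispose of correctness: for each primal cell $\tau \in G_{\delta_1}$ and each vertex $v \in V_\tau$, Theorem \ref{correct} certifies that the procedure counts every pair $(v,\ww)$ with $\fd(v,\ww) \le \eps$ and only counts pairs with $\fd(v,\ww) \le \alpha\eps$ for some constant $\alpha$. Since the algorithm outputs the vertex $v^* \in G_\eps$ that maximizes this count across all primal cells, $v^*$ inherits both sides of the approximation claimed in the theorem.

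For the running time, I would bound the total work by $O\!\left(\sum_\tau \sum_{\tau^*}(m_{\tau^*}+n_{\tau^*})\right)$. Each surface $\sigma_\ww$ is two-dimensional algebraic of constant degree, so it crosses $O(1/\delta_1^2)$ cells of $G_{\delta_1}$, giving $\sum_\tau n_\tau = O(n/\delta_1^2)$. Analogously, each dual surface $\sigma^*_{v;\tau}$ is a three-dimensional algebraic patch of constant degree, crossing $O(1/\delta_2^3)$ dual cells and yielding $\sum_{\tau^*} m_{\tau^*} = O(m_\tau/\delta_2^3)$. Aggregating produces $O(m/\delta_2^3 + n/\delta_1^2)$. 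I would then optimize over $\delta_1,\delta_2 \in [\eps/(2\gamma),1]$, obtaining $\delta_1=(\eps^3 n/m)^{1/5}$ and $\delta_2=(\eps^2 m/n)^{1/5}$ in the feasible range $c'\eps^2 m \le n \le c''m/\eps^3$, and falling back to $\delta_1=\eps$ (only primal) or $(\delta_1,\delta_2)=(1,\eps/(2\gamma))$ (only dual) outside it. Substituting $m = \Theta(1/\eps^4)$ delivers the stated $O(n^{3/5}/\eps^{14/5} + n + 1/\eps^4)$ bound.

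The main obstacle is justifying the constant-degree algebraic-surface crossing bounds, particularly for the dual surface $\sigma^*_{v;\tau}$, which is parameterized by $w \in [0,1]^3$ inside the thin slab $R_\tau$. One has to confirm that its algebraic degree stays bounded uniformly in $\tau$ and $v$, so that classical intersection-number bounds with grid slabs apply and the claimed $O(1/\delta_2^3)$ cell-crossing bound is legitimate. A secondary bookkeeping point is checking that the three parameter regimes glue together at their boundaries $n = \Theta(\eps^2 m)$ and $n = \Theta(m/\eps^3)$, so that the combined upper bound holds uniformly in $n$ and $\eps$ without hidden discontinuities.
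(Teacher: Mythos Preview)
Your proposal is correct and follows essentially the same approach as the paper: the theorem is stated as a direct consolidation of Theorem~\ref{correct} for correctness together with the running-time analysis of Section~3.1, using the same surface-crossing bounds, the same optimization of $\delta_1,\delta_2$, and the same three-regime case split before substituting $m=\Theta(1/\eps^4)$. The paper does not give a separate proof of this theorem beyond the preceding discussion, so your write-up is in fact slightly more explicit than the original.
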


Restricting ourselves only to grid vertices does not really miss any solution. 
We only lose a bit in the quality of approximation, replacing $\eps$ by a slightly large 
constant multiple thereof, when we move from the best solution to a vertex of its grid cell.

%%%%%%%%%%%%%%%%%%%%%%%%%%%%%%%%%%%%%%%%%%%%%%%%%%%%%%%%%%%%%%%%%%%%%%%%%%%%%%%%%%%%%%%%%
\section{Geometric proximity via canonical surfaces}  \label{sec:fonseca} \label{fon:general}

In this section we present a general technique to preprocess a set of algebraic surfaces into
a data structure that can
 answer approximate incidences queries. In this technique we round the $n$ original  surfaces into
a set of canonical surfaces, whose size depends only on $\eps$, such that each original surface has
a canonical surface that is ``close'' to it.
Then we build an octree-based data structure for approximate incidences queries with respect to the canonical surfaces.
% This data structure is in fact an octree in which we store the canonical surfaces. 
However, to reduce the number of intersections  between the cells of the octree and the surfaces, we further reduce
the number of surfaces as  we go
 from one level of the octree to the next, by rounding them in a  coarser manner into
a smaller set of surfaces.

This technique has been introduced by  Fonseca and Mount~\cite{fonseca2010} for the case of hyperplanes.
% Since Fonseca and Mount did not provide full details of their technique, 
We describe as a warmup step, in Section \ref{fon:hyperplanes} of the appendix, 
our interpretation of their technique applied to hyperplanes. We then extend here the technique to general surfaces,
% in Section \ref{fon:general},
and apply it to the specific instance of 2-surfaces in 4-space that arise in the camera pose problem.

We have a set $S$ of $n$ $k$-dimensional surfaces in $\reals^d$
that cross the unit cube $[0,1]^d$, and a given error parameter $\eps$.
We assume that each surface $\sigma \in S$ is given in parametric form, where
the first $k$ coordinates are the parameters, so its equations are
\[
x_j = F_j^{(\sigma)}(x_1,\ldots,x_k) ,\qquad \text{for $j=k+1,\ldots,d$} .
\]
Moreover, we assume that each $\sigma\in S$ is defined in terms of \emph{$\ell$ essential
parameters} $\t=(t_1,\ldots,t_\ell)$, and $d-k$ additional
\emph{free additive parameters} $\ff=(f_{k+1},\ldots,f_d)$,
one free parameter for each dependent coordinate. Concretely, we assume that the equations defining the
surface $\sigma\in S$, parameterized by $\t$ and $\ff$
(we then denote $\sigma$ as $\sigma_{\t,\ff}$), are
\[
x_j = F_j(\xx;\t) + f_j = F_j(x_1,\ldots,x_k;t_1,\ldots,t_\ell) + f_j ,\qquad \text{for $j=k+1,\ldots,d$} .
\]

For each equation of the surface that does not have a free parameter in the original expression,
we introduce an {\em artificial} free parameter, and initialize its value to $0$.
(We need this separation into essential and free parameters for technical reasons
that will become clear later.) We assume that $\t$ (resp., $\ff$) varies over
% some compact $\ell$-dimensional domain $L$ (resp., $(d-k)$-dimensional domain $\Phi$)
%\haim{Dimension here can be lower if these are artificially added parameters},
% which, for simplicity of presentation, we take to be 
$[0,1]^\ell$ (resp., $[0,1]^{d-k}$).

\medskip
\noindent{\bf Remark.}
\noindent
The distinction between free and essential parameters seems to be artificial, but yet 
free parameters do arise in certain basic cases, such as the case of hyperplanes 
discussed in Section \ref{fon:hyperplanes} of the appendix. In the case of our 2-surfaces in 4-space,
the parameter $w_3$ is free, and we introduce a second artificial
free parameter into the equation for $\kappa$. The number of essential
parameters is $\ell=4$ (they are $w_1$,$w_2$,$\xi$, and $\eta$).

% \medskip
% \noindent
% 2) We assume that the parameterization of the surfaces is given as part of the input.
% Specifically, the input contains the definitions of the functions $F_j$ and the parameters $\ff$ for each surface.
% (Note that the parametrization of the surfaces is not unique, since
% we can, for example, add a constant to the function $F_j$ and subtract it from the corresponding free parameter $f_j$, for any surface and any $j$).

\medskip

We assume that the functions $F_j$ are all continuous and differentiable,
in all of their dependent variables $\xx$, $\t$ and $\ff$ (this is a trivial
assumption for $\ff$), and that they satisfy the following two conditions.

\medskip
\noindent{\bf (i) Bounded gradients.}
$
\left|\nabla_\xx F_j(\xx;\t) \right| \le c_1 ,\quad \left|\nabla_\t F_j(\xx;\t) \right| \le c_1 ,
$
for each $j=k+1,\ldots,d$, for any $\xx\in [0,1]^k$ and any
$\t\in [0,1]^\ell$, where $c_1$ is some absolute constant.
Here $\nabla_\xx$ (resp., $\nabla_\t$) means the gradient with respect to only
the variables $\xx$ (resp., $\t$).

\medskip
\noindent{\bf (ii) Lipschitz gradients.}
$
\left| \nabla_\xx F_j(\xx;\t) - \nabla_\xx F_j(\xx;\t') \right| \le c_2 |\t-\t'| ,
$
for each $j=k+1,\ldots,d$, for any $\xx\in [0,1]^k$ and any
$\t$, $\t'\in [0,1]^\ell$, where $c_2$ is some absolute constant.
This assumption is implied by the assumption that all the eigenvalues of the
mixed part of the Hessian matrix $\nabla_\t \nabla_\xx F_j(\xx;\t)$ have absolute value bounded by $c_2$.

%%%%%%%%%%%%%%%%%%%%%%%%%%%%%%%%%%%%%%%%%%%%%%%%%%%%%%%%%%%
\subsection{Canonizing the input surfaces}
\label{sec:canon}

We first replace each surface $\sigma_{\t,\ff} \in S$ by a canonical
``nearby'' surface $\sigma_{\ss,\g}$.
Let $\eps' = \frac{\eps}{c_2\log(1/\eps)}$ where $c_2$ is the constant from Condition (ii). 
We get $\ss$ from $\t$ (resp., $\g$ from $\ff$) by rounding 
each coordinate in the essential parametric domain $L$ (resp., 
in the parametric domain $\Phi$) to a multiple of $\eps'/(\ell+1)$.
Note that each of the artificial free parameters
(those that did not exist in the original equations) has the initial value $0$
for all surfaces, and remains $0$ in the rounded surfaces.
We get  $O\left( (1/\eps')^{\ell'} \right)$ {\em canonical} rounded surfaces, where
$\ell' \ge \ell$ is the number of \emph{original} parameters, that is, the
number of essential parameters plus the number of non-artificial free parameters;
in the worst case we have $\ell'=\ell+d-k$.

For a surface $\sigma_{\t,\ff}$ and its rounded version $\sigma_{\ss,\g}$ we have, for each $j$,
\begin{align*}
\left| \left( F_j(\xx;\t)+f_j\right) - \left(F_j(\xx;\ss)+g_j\right) \right|
 & \le |\nabla_\t F_j(\xx;\t')| \cdot  | \t - \ss | + |f_j-g_j| \\
& \le  c_1 | \t - \ss | + |f_j-g_j| \le (c_1+1) \eps' ,
\end{align*}
where $\t'$ is some intermediate value, which is irrelevant due to Condition (i).

We will use the $\ell_2$-norm of the difference vector
$\left( (F_j(\xx;\t)+f_j) - (F_j(\xx;\ss)+g_j) \right)_{j=k+1}^d$
as the measure of proximity between the
surfaces $\sigma_{\t,\ff}$ and $\sigma_{\ss,\g}$ at $\xx$, and denote it as
$\dist(\sigma_{\t,\ff},\sigma_{\ss,\g};\xx)$. The maximum
$\dist(\sigma_{\t,\ff},\sigma_{\ss,\g}) :=
\max_{\xx\in [0,1]^k} \dist(\sigma_{\t,\ff},\sigma_{\ss,\g};\xx)$ measures the
global proximity of the two surfaces. (Note that it is an upper bound on
the Hausdorff distance between the two surfaces.) We thus have
$\dist(\sigma_{\t,\ff},\sigma_{\ss,\g}) \le (c_1+1) \eps' $ when
$\sigma_{\ss,\g}$ is the canonical surface approximating $\sigma_{\t,\ff}$.

We define the {\em weight} of each
canonical surface to be the number of original surfaces that got rounded to it, 
and we refer to the set of all canonical surfaces by $S^c$.

%%%%%%%%%%%%%%%%%%%%%%%%%%%%%%%%%%%%%%%%%%%%%%%%%%%%%%%%%%%
\subsection{Approximately counting $\eps$-incidences}

We describe an algorithm for approximating the $\eps$-incidences counts of the surfaces in $S$
and the vertices of a grid $G$ of side length $4\eps$.

We construct an octree decomposition of $\tau_0 := [0,1]^d$, all the way to subcubes of side length $4\eps$
such that each vertex of $G$ is the center of a leaf-cube. We propagate the surfaces of $S^c$ down this octree,
further rounding each of them within each subcube that it crosses.

\looseness=-1
The root of the octree corresponds to $\tau_0$, and we set $S_{\tau_0} =S^c$.
At level $j\ge 1 $ of the recursion, we have  subcubes $\tau$ of $\tau_0$
of side length $\delta=1/2^j$.
For each such $\tau$, we set $\tilde{S}_\tau$ to be the subset of the surfaces
in $S_{p(\tau)}$ (that have been produced at the parent cube $p(\tau)$ of $\tau$)
that intersect $\tau$. We now show how to further round the
surfaces of $\tilde{S}_\tau$, so as to get a coarser set $S_\tau$ of surfaces
that we associate with $\tau$, and that we process recursively within $\tau$.

At any node $\tau$ at level $j$
of our rounding process, each surface $\sigma$ of
$S_\tau$ is of the form
 $x_j = H_j(\xx;\t) + f_j$, for $j=k+1,\ldots,d$ where $\xx = (x_1,\ldots,x_k)$, and ${\bf t} = (t_1,\ldots,t_\ell)$.

\smallskip
\noindent
(a) For each $j=k+1,\ldots,d$ the function $H_j$ is a translation of $F_j$. That is
$H_j(\xx;\t) = F_j(\xx;\t)+c$ for some constant $c$. Thus the gradients of $H_j$ also satisfy Conditions (i) and (ii).

\smallskip
\noindent (b) $\t$ is some vector of $\ell$ essential parameters, and each coordinate of $\t$ is an integer multiple of
$\frac{\eps'}{(\ell+1)\delta}$, where
 $\delta = 1/2^j$.

\smallskip
\noindent (c) $\ff=(f_{k+1},\ldots,f_d)$ is a vector of free parameters, each is a multiple of $\eps'/(\ell+1)$.

Note that the surfaces in $S_{\tau_0} =S^c$, namely the set of initial canonical surfaces
constructed in Section \ref{sec:canon},
are of this form (for $j=0$ and $H_j = F_j$).
We get $S_\tau$ from $\tilde{S}_\tau\subseteq S_{p(\tau)}$ by the following steps.
The first step just changes the presentation of $\tau$ and
$\tilde{S}_\tau$, and the following steps do the actual rounding to obtain
$S_\tau$.

\begin{enumerate}
\item
Let  $(\xi_1,\ldots,\xi_k,\xi_{k+1},\ldots,\xi_d)$ be the point in $\tau$ of smallest coordinates and set $\xxi = (\xi_1,\ldots,\xi_k)$.
We rewrite the equations of each surface of $\tilde{S}_\tau$ as follows:
 $x_j = G_j(\xx;\t) + f'_j$, for $j=k+1,\ldots,d$, where
 $G_j(\xx;\t) = H_j(\xx ;\t) - H_j(\xxi ;\t) +\xi_j $, and
$f'_j =  f_j + H_j(\xxi ;\t) - \xi_j$, for $j=k+1,\ldots,d$. Note that in this reformulation we have not changed the essential parameters, but we
did change the free parameters from $f_j$ to $f'_j$, where $f'_j$ depends on $f_j$, $\t$, $\xxi$, and
$\xi_j$. Note also that $G_j(\xxi;\t)=\xi_j$ for $j=k+1,\ldots,d$.
\item
We replace the essential parameters $\t$ of a
surface $\sigma_{\t,\ff}$ by $\ss$, which we obtain by rounding each
coordinate of $\t$ to the nearest integer multiple of $\frac{\eps'}{(\ell+1)\delta}$.
So the  rounded surface has the equations
$x_j = G_j(\xx;\ss) + f'_j$, for $j=k+1,\ldots,d$.
Note that we also have that $G_j(\xxi;\ss)=\xi_j$, for $j=k+1,\ldots,d$.

\item
For each surface,  we round each  free parameter $f'_j$, $j=k+1,\ldots,d$,
to an integral multiple of $\frac{\eps'}{\ell+1}$, and denote the rounded vector by $\g$.
Our final equations for each rounded surface that we put in $S_\tau$ are
$x_j = G_j(\xx;\ss)  + g_j$ for $j=k+1,\ldots,d$.

\end{enumerate}

By construction, when $\t_1$ and $\ff'_1$
and  $\t_2$ and $\ff'_2$ get rounded to the same vectors $\ss$ and $\gg$
then the corresponding two surfaces in $\tilde{S}_\tau$ get rounded to the same surface
in $S_\tau$.
The weight of each surface in $S_\tau$ is the sum of
 the weights of the surfaces in $S_{p(\tau)}$
that got rounded to it, which,
 by induction, is the number of  original surfaces that are recursively rounded to it.
In the next step of the recursion
the $H_j$'s of the parametrization of the surfaces in $S_\tau$ are the functions $G_j$ defined above.

The total weight of the surface in $S_\tau$ for a leaf cell $\tau$ is the approximate $\eps$-incidences count that
we associate with the center of $\tau$.

\subsection{Error analysis}

We now bound the error incurred by our discretization.
We start with the following lemma, whose proof is given in Appendix~\label{proofs}.

%%%%%%%%%%%%%%%%%%%%%%%%%%%%%%%%%%
\begin{lemma} \label{mixed:hessian}
Let $\tau$ be a cell of the octtree and let $x_j = G_j(\xx;\t) + f'_j$, for $j=k+1,\ldots,d$ be a surface
obtained in Step 1 of the rounding process described above.
For any $\xx=(x_1,\ldots,x_k)\in [0,\delta]^k$, for any $\t,\ss \in [0,1]^\ell$,
and for each $j=k+1,\ldots,d$, we have
\begin{equation} \label{hdiff}
\left| G_j(\xx;\s) - G_j(\xx;\t) \right| \le c_2|\xx-\xxi|\cdot|\t-\ss| ,
\end{equation}
where $c_2$ is the constant of Condition (ii), and $\xxi = (\xi_1,\ldots,\xi_k)$ consists of the first $k$
coordinates of
the point in $\tau$ of smallest coordinates.
\end{lemma}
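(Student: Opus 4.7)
The plan is to exploit the special structure of $G_j$ established in Step 1 of the rounding process, namely $G_j(\xx;\t) = H_j(\xx;\t) - H_j(\xxi;\t) + \xi_j$. Since the $\xi_j$ term is a pure constant (independent of $\t$), it cancels when we subtract, yielding
\[
G_j(\xx;\s) - G_j(\xx;\t) = \bigl[H_j(\xx;\s) - H_j(\xxi;\s)\bigr] - \bigl[H_j(\xx;\t) - H_j(\xxi;\t)\bigr].
\]
The point of forming this ``double difference'' is that each bracket measures how $H_j$ changes as $\xx$ moves from $\xxi$ to $\xx$ at a fixed value of the essential parameters, so the whole expression will be controlled by how the $\xx$-gradient of $H_j$ varies in the parameters, which is precisely what Condition (ii) bounds (applied to $H_j$, which by observation (a) in the excerpt inherits the gradient hypotheses of the $F_j$'s since it differs from $F_j$ only by an additive constant).

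Next, I would apply the fundamental theorem of calculus along the segment from $\xxi$ to $\xx$, writing, for each value $\uu \in \{\s,\t\}$,
\[
H_j(\xx;\uu) - H_j(\xxi;\uu) = \int_0^1 \nabla_\xx H_j\bigl(\xxi + u(\xx-\xxi);\,\uu\bigr) \cdot (\xx-\xxi)\, du .
\]
Subtracting the two integrals (for $\uu = \s$ and $\uu = \t$) and pulling the common factor $(\xx-\xxi)$ out gives
\[
G_j(\xx;\s) - G_j(\xx;\t) = \int_0^1 \bigl[\nabla_\xx H_j(\xxi + u(\xx-\xxi);\s) - \nabla_\xx H_j(\xxi + u(\xx-\xxi);\t)\bigr] \cdot (\xx-\xxi)\, du .
\]

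To finish, I would apply Cauchy--Schwarz inside the integral to get
\[
\bigl|[\nabla_\xx H_j(\cdot;\s) - \nabla_\xx H_j(\cdot;\t)] \cdot (\xx-\xxi)\bigr| \le \bigl|\nabla_\xx H_j(\cdot;\s) - \nabla_\xx H_j(\cdot;\t)\bigr|\cdot |\xx-\xxi|,
\]
and then invoke Condition (ii) (the Lipschitz property of $\nabla_\xx H_j$ in the essential parameters) to bound the gradient difference by $c_2 |\s - \t|$. Since the bound is uniform in $u$, the trivial integration over $[0,1]$ preserves it, yielding
$|G_j(\xx;\s) - G_j(\xx;\t)| \le c_2 |\xx - \xxi|\cdot |\s - \t|$, as claimed.

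The only genuinely delicate point is verifying that Condition (ii), which was stated for the original $F_j$, still applies to the translated functions $H_j$ that arise after repeated rounding steps; but observation (a) in the text guarantees precisely this, since $H_j$ differs from $F_j$ only by an additive constant, so the mixed gradient $\nabla_\t \nabla_\xx$ is unchanged. Everything else is a routine application of the integral form of the mean value theorem combined with Cauchy--Schwarz, so no serious obstacle is expected.
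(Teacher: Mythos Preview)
Your proof is correct and follows essentially the same approach as the paper. The paper defines $K_j(\xx) := G_j(\xx;\ss) - G_j(\xx;\t)$, observes that $K_j(\xxi)=0$ (equivalent to your cancellation of the $\xi_j$ term), and then applies the mean value theorem to $K_j(\xx)-K_j(\xxi)$ at a single intermediate point $\xx'$, followed by Cauchy--Schwarz and Condition~(ii); you use the integral form of the fundamental theorem of calculus along the segment from $\xxi$ to $\xx$ instead, but the substance is identical.
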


%%%%%%%%%%%%%%%%%%%%%%%%%%%%%%%%
\begin{lemma} \label{thm:error}
For any $\xx=(x_1,\ldots,x_k)\in [0,\delta]^k$, for any $\t$, $\ss\in [0,1]^\ell$,
and for each $j=k+1,\ldots,d$, we have
\begin{equation} \label{hdiff1}
\left| G_j(\xx;\s) +g_j - (G_j(\xx;\t) + f'_j) \right| \le c_2\eps' \le \frac{\eps}{\log(1/\eps)},
\end{equation}
where $c_2$ is the constant of Condition (ii).
\end{lemma}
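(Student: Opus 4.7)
The plan is to bound the error by the triangle inequality, splitting the discrepancy between the original and rounded surfaces into the contribution from rounding the essential parameters $\t\to\ss$ (Step 2) and the contribution from rounding the free parameters $\ff'\to\gg$ (Step 3):
\[
\bigl| G_j(\xx;\ss) + g_j - G_j(\xx;\t) - f'_j \bigr|
\le \bigl| G_j(\xx;\ss) - G_j(\xx;\t) \bigr| + |g_j - f'_j| .
\]
The first term is controlled by the already-established Lemma~\ref{mixed:hessian}; the second term is controlled purely by the granularity of Step 3.

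For the essential-parameter term, I would invoke Lemma~\ref{mixed:hessian}, which gives the bound $|G_j(\xx;\ss)-G_j(\xx;\t)| \le c_2 |\xx-\xxi|\cdot|\t-\ss|$. Since $\xx\in[0,\delta]^k$ relative to the corner $\xxi$ of $\tau$, we have $|\xx-\xxi| \le \sqrt{k}\,\delta$. By Step 2 each coordinate of $\ss$ differs from the corresponding coordinate of $\t$ by at most $\tfrac{\eps'}{2(\ell+1)\delta}$, so $|\t-\ss| \le \tfrac{\sqrt{\ell}\,\eps'}{2(\ell+1)\delta}$. Multiplying, the two powers of $\delta$ cancel exactly, and we obtain
\[
\bigl| G_j(\xx;\ss) - G_j(\xx;\t) \bigr|
\le c_2 \cdot \sqrt{k}\,\delta \cdot \frac{\sqrt{\ell}\,\eps'}{2(\ell+1)\delta}
= \frac{c_2\sqrt{k\ell}}{2(\ell+1)}\,\eps' .
\]
For the free-parameter term, Step 3 rounds $f'_j$ to the nearest integer multiple of $\tfrac{\eps'}{\ell+1}$, giving $|f'_j-g_j|\le \tfrac{\eps'}{2(\ell+1)}$. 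Adding the two contributions and absorbing the numerical factors into $c_2$ yields the first inequality $c_2\eps'$. The second inequality is then immediate from the definition $\eps' = \eps/(c_2\log(1/\eps))$, which gives $c_2\eps' = \eps/\log(1/\eps)$.

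The main conceptual point — and the reason this scheme avoids a blow-up at deep octree levels — is the cancellation of $\delta$. The rounding scale $\tfrac{\eps'}{(\ell+1)\delta}$ becomes very coarse for small $\delta$, and if one naively used only the Lipschitz bound $|\nabla_\t F_j|\le c_1$ from Condition~(i), the essential-parameter error would scale as $\eps'/\delta$ and diverge as the recursion deepens. The strategy in Step~1, of re-centering the parametric representation at the local corner $\xxi$ so that $G_j(\xxi;\t)=\xi_j$ is independent of $\t$, makes $G_j(\xx;\cdot)-G_j(\xxi;\cdot)$ vanish on the boundary $\xx=\xxi$, and Condition~(ii) on the mixed Hessian transforms this observation into the bound of Lemma~\ref{mixed:hessian} that is linear in $|\xx-\xxi|$. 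That extra factor of $|\xx-\xxi|\le\sqrt{k}\,\delta$ exactly cancels the offending $1/\delta$, and is the only nontrivial ingredient in the argument.
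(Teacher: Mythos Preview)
Your proof is correct and follows exactly the same approach as the paper: split by the triangle inequality, bound the essential-parameter term via Lemma~\ref{mixed:hessian} (so that the factor $|\xx-\xxi|\le O(\delta)$ cancels the $1/\delta$ in the rounding granularity), and bound the free-parameter term directly by the Step~3 resolution. The only cosmetic difference is that the paper tracks the constants slightly differently (using $|\xx-\xxi|\le\delta$ and $|\t-\ss|\le \ell\eps'/((\ell+1)\delta)$ rather than your $\ell_2$-norm factors $\sqrt{k}$ and $\sqrt{\ell}$), which lets it land exactly on $c_2\eps'$ without your ``absorbing'' step; if you want the stated bound with $c_2$ literally the constant of Condition~(ii), you should adopt those cruder estimates.
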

%%%%%%%%%%%%%%%%%%%%%%%%%%%%%%%%
%\noindent{\bf Proof.}
\begin{proof}
Using the triangle inequality and Lemma \ref{mixed:hessian}, we get that
\begin{align*}
\left| G_j(\xx;\s) +g_j - (G_j(\xx;\t) + f'_j) \right| &
\le \left| G_j(\xx;\s) - G_j(\xx;\t) \right| + \left| g_j - f'_j \right| \le c_2 |\xx-\xxi| |\t-\ss| + \frac{\eps'}{\ell+1} \ .
\end{align*}
Since $|\xx-\xxi| \le \delta$, $|\t-\ss| \le \frac{\ell\eps'}{(\ell+1)\delta}$, and $| g_j - f'_j | \le \frac{\eps'}{\ell+1}$, the lemma follows.
\end{proof}

We now bound the number of surfaces in $S_\tau$.
Since $\ss\in [0,1]^\ell$ and each of its coordinates is a multiple of $\frac{\eps'}{(\ell+1)\delta}$, we have at most
$(\frac{\delta}{\eps'})^\ell$ different values for $\ss$.
To bound the number of possible values of $\g$, we prove the following lemma (see the appendix for the proof).

%%%%%%%%%%%%%%%%%%%%%%%%%%%%%%%%%
\begin{lemma} \label{const:snap}
Let
 $x_j = G_j(\xx;\t) + f'_j$, for $j=k+1,\ldots,d$, be a surface $\sigma_{\t,\ff'}$ in $\tilde{S}_\tau$.
For each $j=k+1,\ldots,d$, we have
$\left|  f'_j \right| \le (c_1+1)\delta$,
where $c_1$ is the constant of Condition (i).
\end{lemma}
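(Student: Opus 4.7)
The plan is to exploit two facts about the surface $\sigma_{\t,\ff'} \in \tilde{S}_\tau$: first, by definition of $\tilde{S}_\tau$, it intersects the cell $\tau$; second, the reparametrization in Step 1 of the rounding process was arranged precisely so that $G_j(\xxi;\t) = \xi_j$ for $j = k+1,\ldots,d$. This calibration point $\xxi$ in the first $k$ coordinates of the ``lower corner'' of $\tau$ will be the anchor against which we control the size of $f'_j$.

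First I would write down the intersection condition explicitly. Since $\sigma_{\t,\ff'}$ meets $\tau = [\xi_1,\xi_1+\delta]\times\cdots\times[\xi_d,\xi_d+\delta]$, there exists a point $\xx^*\in[0,\delta]^k$ (measured relative to $\xxi$) such that for every $j=k+1,\ldots,d$ the value $G_j(\xx^*+\xxi;\t) + f'_j$ lies in $[\xi_j,\xi_j+\delta]$. Rearranging,
\[
|f'_j - (\xi_j - G_j(\xx^*+\xxi;\t))| \le \delta .
\]

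Next I would bound $|G_j(\xx^*+\xxi;\t) - \xi_j|$ using the calibration identity $G_j(\xxi;\t)=\xi_j$ and Condition (i). Since $G_j$ is a translate of $F_j$, its $\xx$-gradient satisfies $|\nabla_\xx G_j|\le c_1$, so by the mean value theorem
\[
|G_j(\xx^*+\xxi;\t) - G_j(\xxi;\t)| \le c_1|\xx^*| \le c_1\delta ,
\]
where the factor $\sqrt{k}$ coming from an $\ell_2$ bound on $|\xx^*|$ is absorbed into the constant $c_1$ (since $k$ is a fixed dimension). Combining with the previous displayed inequality via the triangle inequality yields $|f'_j|\le c_1\delta + \delta = (c_1+1)\delta$, as claimed.

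The only subtle point to watch is that Step 1 of the rounding defines $f'_j = f_j + H_j(\xxi;\t) - \xi_j$ using the parent-level function $H_j$, not $G_j$; however, by construction $G_j(\xxi;\t) = H_j(\xxi;\t) - H_j(\xxi;\t) + \xi_j = \xi_j$, so the calibration identity holds exactly at $\xxi$ and the argument goes through unchanged. There is no genuine obstacle beyond this bookkeeping: once the identity $G_j(\xxi;\t)=\xi_j$ is in hand and the gradient bound on $G_j$ is invoked, the bound on $|f'_j|$ follows immediately from the fact that the surface passes through $\tau$.
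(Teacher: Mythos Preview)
Your proof is correct and follows essentially the same route as the paper: use that $\sigma_{\t,\ff'}$ meets $\tau$ to get $|G_j(\xx;\t)+f'_j-\xi_j|\le\delta$, use the calibration identity $G_j(\xxi;\t)=\xi_j$ together with the gradient bound from Condition~(i) and the mean value theorem to get $|G_j(\xx;\t)-\xi_j|\le c_1\delta$, and combine via the triangle inequality. Your remark about absorbing the $\sqrt{k}$ factor into $c_1$ is apt (the paper glosses over this), and your bookkeeping observation about $H_j$ versus $G_j$ is exactly the point the paper records at the end of Step~1.
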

%%%%%%%%%%%%%%%%%%%%%%%%%%%%%%%%%

Lemma \ref{const:snap} implies that each $g_j$, $j=k+1,\ldots,d$, has
only $O(\frac{\delta}{\eps'})$ possible values, for a total of at most $O((\frac{\delta}{\eps'})^{d-k})$
possible values for $\g$.
% \micha{Do the artificial free parameters help here?}
Combining the number of possible values for $\s$ and $\g$, we get that
the number of newly discretized surfaces in $S_\tau$ is
\begin{equation} \label{num:new}
O\left( \left( \frac{\delta}{\eps'} \right)^\ell \cdot \left(\frac{\delta}{\eps'} \right)^{d-k} \right)
= O\left( \left( \frac{\delta}{\eps'} \right)^{\ell+d-k} \right) .
\end{equation}

%\subsubsection{Overall complexity bound}
%The first phase of the procedure generates
%$
%O\left( \left( \frac{1}{\eps'} \right)^{\ell'} \right)
%$
%canonical surfaces, where $\ell'$ is the number of
%essential parameters plus the number of non-artificial free parameters;
%as observed, $\ell \le \ell'\le \ell + d -k$.

It follows that each level of the recursive octree decomposition generates
\[
O\left( \left(\frac{1}{\delta}\right)^d \cdot
\left( \frac{\delta}{\eps'} \right)^{\ell+d-k} \right)
= O\left( \frac{\delta^{\ell-k}}{(\eps')^{\ell+d-k}} \right)
\]
re-discretized surfaces, where the first factor in the left-hand side expression
is the number of cubes generated at this recursive level, and the second factor
is the one in (\ref{num:new}).

Summing over the recursive levels $j=0,\ldots,\log\frac{1}{\eps}$, where the cube size $\delta$
is $1/2^j$ at level $j$, we get a total size of
$
O\left( \frac{1}{(\eps')^{\ell+d-k}} \sum_{j=0}^{\log\frac{1}{\eps}} \frac{1}{2^{j(\ell-k)}} \right)$.
We get different estimates for the sum according to the sign of $\ell-k$.
If $\ell>k$ the sum is $O(1)$. If $\ell=k$ the sum is $O\left(\log\frac{1}{\eps}\right)$.
If $\ell<k$ the sum is
$
O\left( 2^{j_{\rm max}(k-\ell)} \right) = O\left( \frac{1}{(\eps')^{k-\ell}} \right) .
$
Accordingly, the overall size of the structure, taking also into account the cost of the first phase, is
\begin{equation} \label{size:bound}
\begin{cases}
O\left(\frac{1}{(\eps')^{\ell+d-k}} \right) & \text{for $\ell > k$} \\
O\left(\frac{1}{(\eps')^{d}}\log\frac{1}{\eps} \right) & \text{for $\ell = k$} \\
O\left(\frac{1}{(\eps')^{d}} \right) & \text{for $\ell < k$} .
\end{cases}
\end{equation}

The following theorem summarizes the result of this section. Its proof follows in a straightforward way from the preceding discussion
from Lemma \ref{thm:error},
analogously to the proof of Lemma~\ref{const:snap} in the appendix.

\begin{theorem} \label{thm:fon-surf}
Let $S$ be a set of $n$ surfaces
in $\reals^d$ that cross the unit cube $[0,1]^d$,
given parametrically as $x_j = F_j(\xx;\t) + f_j$ for $j=k+1,\ldots,d$, where the functions $F_j$ satisfy conditions (i) and (ii),
and $\t = (t_1,\ldots, t_\ell)$.
Let $G$ be the  $(4\eps)$-grid within $[0,1]^d$.
The algorithm described above reports for each vertex $v$ of $G$ an approximate $\eps$-incidences count that includes all
surfaces at distance at most $\eps$ from $v$ and may include some surfaces at distance at most $(2\sqrt{d}+1)\eps$ from $v$.
The running time of this algorithm is proportional to the total number of rounded surfaces that it generates, which is given by
Equation (\ref{size:bound}), plus an additive $O(n)$ term for the initial canonization of the surfaces.
\end{theorem}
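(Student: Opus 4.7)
The plan is to establish the three claims of the theorem---the asymptotic running time, the inclusion of every surface at distance at most $\eps$ from a grid vertex $v$ in its count, and the upper bound of $(2\sqrt{d}+1)\eps$ on the distance of every counted surface from $v$---by combining the per-level error bound of Lemma \ref{thm:error} with the structural size bound (\ref{size:bound}).

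First, the running time is straightforward: the work at each octree node $\tau$ is $O(|S_\tau|)$, since the reconstruction of $S_\tau$ from $\tilde{S}_\tau$ only re-rounds each inherited surface in constant time (steps 1--3 of the rounding process), and the initial canonization of the input costs $O(n)$. Summing $\sum_\tau |S_\tau|$ over all octree nodes reproduces the estimate (\ref{size:bound}).

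For the two correctness claims, the key ingredient is a uniform pointwise error bound between an original surface $\sigma_0 \in S$ and its final rounded descendant $\sigma^{(\rm f)}$ along the octree path to any leaf cube $\tau$. Chaining Lemma \ref{thm:error} across the $\log(1/\eps)$ levels, the two surfaces differ in each dependent coordinate by at most $\log(1/\eps) \cdot (\eps/\log(1/\eps)) = \eps$ at any common value $\xx \in [0,1]^k$ of the independent coordinates. With this in hand, the upper bound claim is direct: if $\sigma^{(\rm f)} \in S_\tau$ is counted at the grid vertex $v$ (the center of the leaf cube $\tau$, which has side $4\eps$), then $\sigma^{(\rm f)}$ crosses $\tau$, so it contains a point within Euclidean distance $2\sqrt{d}\eps$ of $v$; perturbing back to $\sigma_0$ along the dependent coordinates adds at most $\eps$, giving the claimed $(2\sqrt{d}+1)\eps$ bound.

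For the inclusion claim, I would induct on octree depth. Suppose a point $p \in \sigma_0$ satisfies $|p - v| \le \eps$; then $|p - v|_\infty \le \eps$ places $p$ inside the leaf cube $\tau$, and at $L_\infty$-distance at least $\eps$ from the boundary of every ancestor cube $\tau_j \supseteq \tau$ (since each such cube contains $\tau$ and therefore extends at least $2\eps$ from $v$ in every coordinate direction). At each ancestor level $j$, the rounded version $\sigma^{(j)}$ of $\sigma_0$ contains the point with the same independent coordinates $\xx$ as $p$, whose dependent coordinates differ from those of $p$ by at most the cumulative error up to level $j$, which is bounded by $\eps$. This keeps the perturbed point inside $\tau_j$, so $\sigma^{(j)}$ crosses $\tau_j$, is retained in the descent, and ultimately contributes its weight (which includes that of $\sigma_0$) to the count at $v$. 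The main subtlety---and the reason why $\eps'$ is calibrated as $\eps/(c_2\log(1/\eps))$---is precisely to ensure that the aggregated per-level slack never exceeds this $\eps$ margin; once this calibration is verified, both correctness claims fall out immediately.
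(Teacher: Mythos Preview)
Your proposal is correct and follows essentially the same approach as the paper, which defers to the hyperplane case (Theorem~\ref{thm:fon-hyper}) and invokes Lemma~\ref{thm:error} for the per-level error. One small imprecision: in the upper-bound direction you assert that the final rounded surface $\sigma^{(\rm f)}\in S_\tau$ itself crosses $\tau$, but only its predecessor in $\tilde{S}_\tau$ is guaranteed to cross $\tau$; since the last rounding step moves each dependent coordinate by at most $\eps/\log(1/\eps)$, this does not affect the $(2\sqrt{d}+1)\eps$ conclusion.
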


We can modify our data structure so that it can answer approximate or exact $\eps$-incidence queries as we
describe in Section \ref{fon:hyperplanes} of the appendix for the case of hyperplanes.

%%%%%%%%%%%%%%%%%%%%%%%%%%%%%%%%%%%%%%%%%%%%%%%%%%%%%%%

%\pagestyle{fancy}
%\fancyhead[RE,RO]{}
%\fancyhead[LE,LO]{}

\section{Experimental Results} \label{sec:experiments}

The goal of the experimental results is to show the practical relation between the naive, the primal-dual and the general canonical surfaces algorithms. It is not our intention to obtain the fastest possible code, but to obtain a platform for fair comparison between the techniques. We have performed a preliminary experimental comparison using synthetic as well as real-world data. We focus on values of $n,\eps$ that are practical in real applications. Typically, we have $100K$-$200K$ $3$D points bounded by a rectangle of size 100-150 meters and the uncertainty is around 3m (so the relative error is $\eps~=0.03$). The three methods that we evaluate are:
\begin{itemize}
\item The naive method, with asymptotic run-time  $O(\frac{n}{\eps^2})$.
\item The primal-dual method (cf. Section \ref{sec:primal_dual}), with asymptotic run-time $O(n+\frac{n^{3/5}}{\eps^{14/5}}+\frac{1}{\eps^4})$.
\item The canonical surfaces method (cf. Section \ref{fon:general}), with asymptotic run-time $\tilde{O}(n+\frac{1}{\eps^6})$ (ignoring poly logarithmic factors).
\end{itemize}

In all experiments we normalize the data, so that the camera position ($x,y,z$) and the $3$D points lie in the unit box $[0,1]^3$, and the forth parameter ($\kappa$) representing the camera orientation lies in $[-1,1]$.

\subsection{Random synthetic data}Starting from a fixed known camera pose, we generate a set of $n$ uniformly sampled $3$D points which are projected onto the camera image plane using Eq. (\ref{cweqs}). To model outliers in the association process we use random projections for 90\% of the $3D$ points, resulting in an inlier ratio of $10\%$. We add Gaussian noise of zero mean and $\sigma=0.02$ to the coordinates of each $3D$ point. This provides us with $2$D-$3$D correspondences that are used for estimating the camera pose.
We apply the three algorithms above and measure the run-times, where each algorithm is tested for its ability to reach approximately the (known) solution. We remark that the actual implementation may be slowed down by the (constant) cost of some of its primitive operations, but it can also gain efficiency from certain practical heuristic improvements. For example, in contrast to the worst case analysis, we could stop the recursion in the algorithm of Section~\ref{sec:fonseca}, at any step of the octree expansion, whenever the maximum incidence count obtained so far is larger than the number of surfaces crossing a cell of the octree. The same applies for the primal-dual technique in the dual stage. On the other hand, finding whether or not a surface crosses a box in pose space, takes at least the time to test for intersections of the surface with 32 edges of the box, and this constant affects greatly the run-time. The $O(1/\eps^6)$ bound in the canonical surfaces algorithm is huge and has no effect in practice for this problem. For this reason, the overall number of surfaces that we have to consider in the recursion can be very large. The canonical surfaces algorithm in our setting does not change much with $\eps$ because we are far from the second term effect. We show in Figure~\ref{fig:asympth_vs_real}, a comparison of the three algorithms.

\begin{figure}[!h]
\centering
\captionsetup{justification=centering,margin=2cm}
\begin{minipage}{.32\linewidth}
    \includegraphics[width=\linewidth]{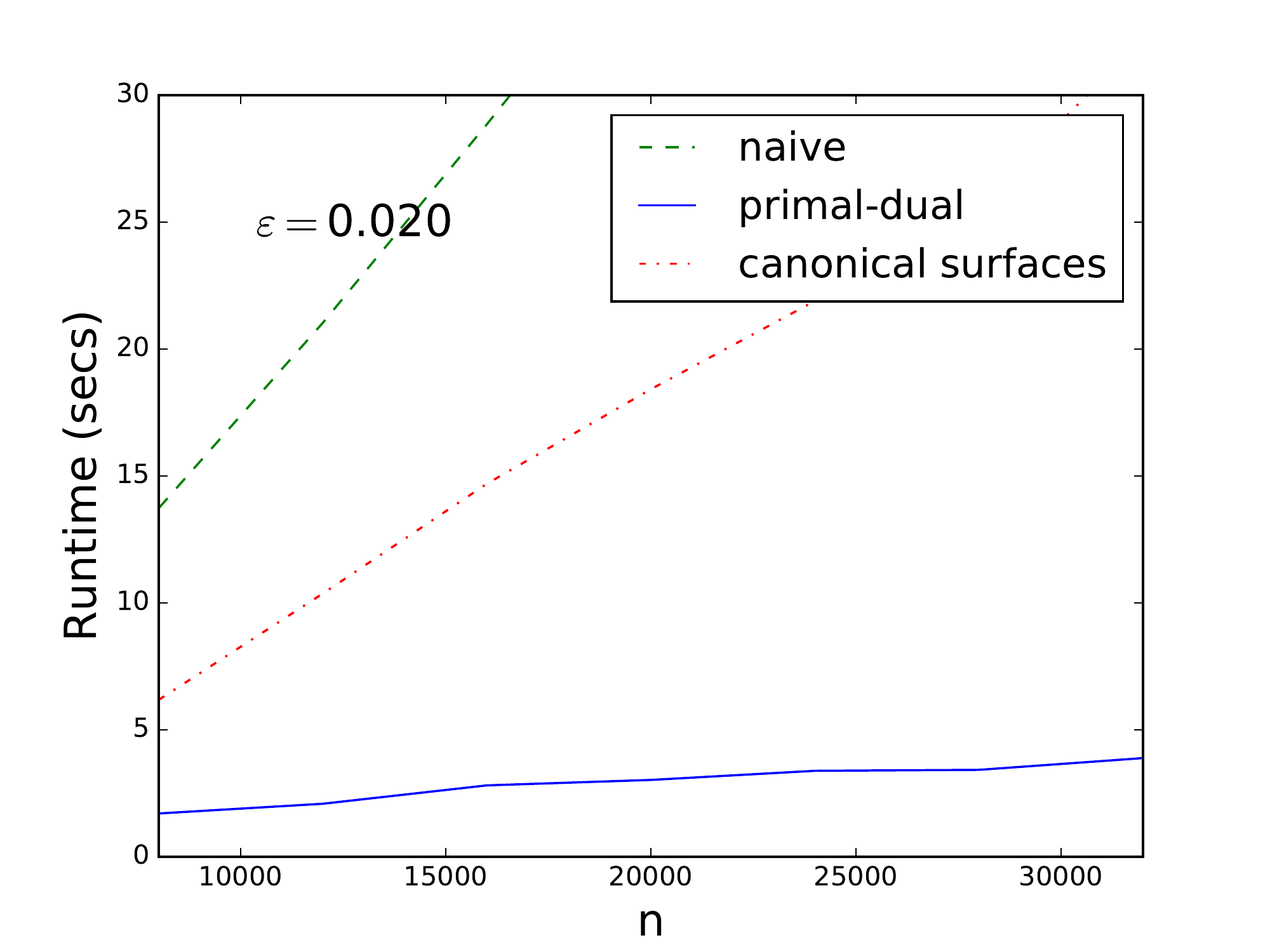}
\end{minipage}
\begin{minipage}{.32\linewidth}
    \includegraphics[width=\linewidth]{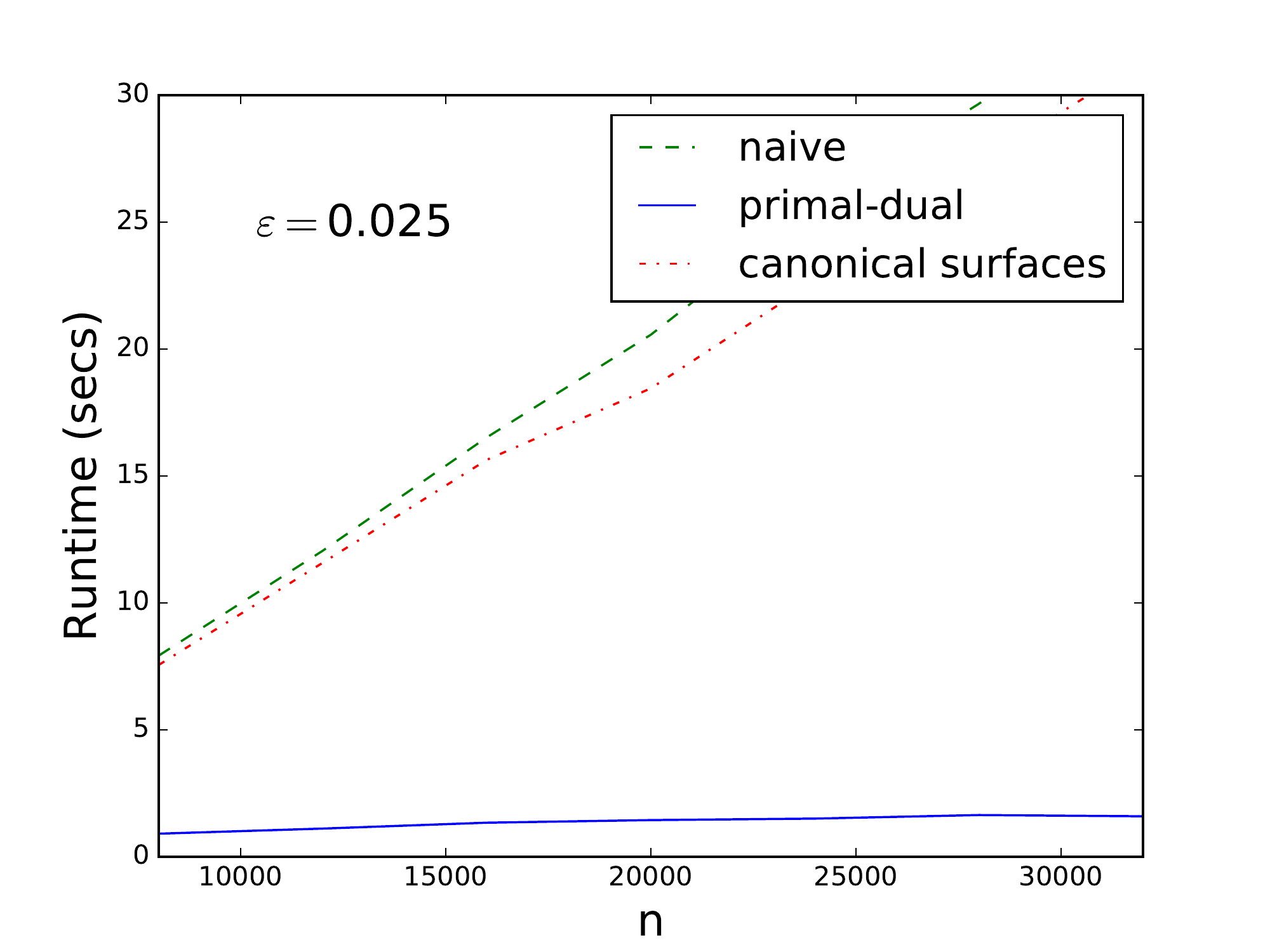}
\end{minipage}
\begin{minipage}{.32\linewidth}
    \includegraphics[width=\linewidth]{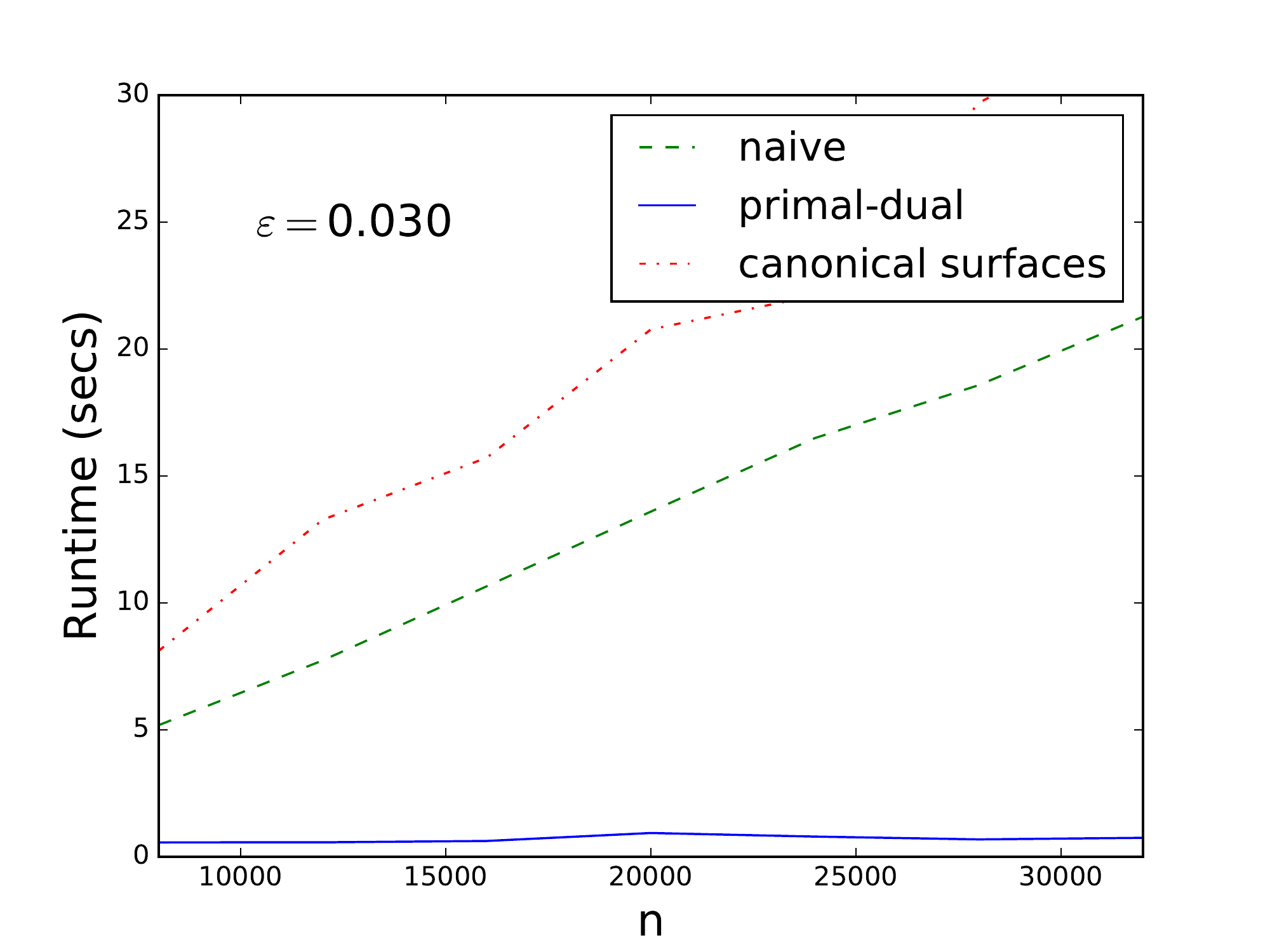}
\end{minipage}
\center
\caption{The run-time of the three methods for various values of $\eps$.}
\label{fig:asympth_vs_real}
\end{figure}

The computed camera poses corresponding to Figure \ref{fig:asympth_vs_real},
obtained by the three algorithm for various problem sizes, are displayed
in Table \ref{tab:poses}, compared to the known pose. The goal here is not to obtain the most accurate algorithm but to show that they are comparable in accuracy in this setting so the runtime comparison is fair.

\begin{table}[h]
\begin{center}
 \begin{tabular}{||c||c c c c||}
 \hline
 n & x(N/PD/C) & y(N/PD/C) & z(N/PD/C) & $\kappa$(N/PD/C) \\ [0.5ex]
 \hline\hline
 8000 & 0.31/0.31/0.28 & 0.22/0.2/0.18 & 0.1/0.12/0.09 & 0.55/0.66/0.59 \\
 \hline
 12000 & 0.31/0.33/0.28 & 0.22/0.17/0.19 & 0.1/0.1/0.1 & 0.55/0.65/0.6 \\
 \hline
 24000 & 0.31/0.3/0.28 & 0.22/0.2/0.18 & 0.1/0.1/0.09 & 0.55/0.61/0.59 \\
 \hline\hline
 32000 & 0.31/0.27/0.28 & 0.22/0.2/0.19 & 0.1/0.08/0.09 & 0.55/0.57/0.59 \\
 \hline\hline
 True pose & 0.3 & 0.2 & 0.1 & 0.6 \\
 \hline
\end{tabular}
\end{center}
\captionsetup{justification=centering,margin=2cm}
\caption{Poses computed by the three algorithms for $\eps=0.03$ and various problem sizes (N:naive, PD:primal-dual, C:canonical surfaces).}
\label{tab:poses}
\end{table}

\subsection{Real-world data}We evaluated the performance of the algorithms also on real-world datasets
for which the true camera pose is known. The input is a set of correspondences, each
represented by a $5$-tuple $(w_1,w_2,w_3,\xi,\eta)$, where $(w_1,w_2,w_3)$ are the
$3D$ coordinates of a salient feature in the scene and $(\xi,\eta)$ is its corresponding projection
in the camera frame. We computed the camera pose from these matches using both
primal-dual and naive algorithms and compared the poses to the true one. An example of the data we have used is
shown in Figure \ref{fig:realdata_input}.

\begin{figure}[h]
\centering
\captionsetup{justification=centering,margin=2cm}
\begin{minipage}{.32\linewidth}
    \centering
    \includegraphics[width=\linewidth]{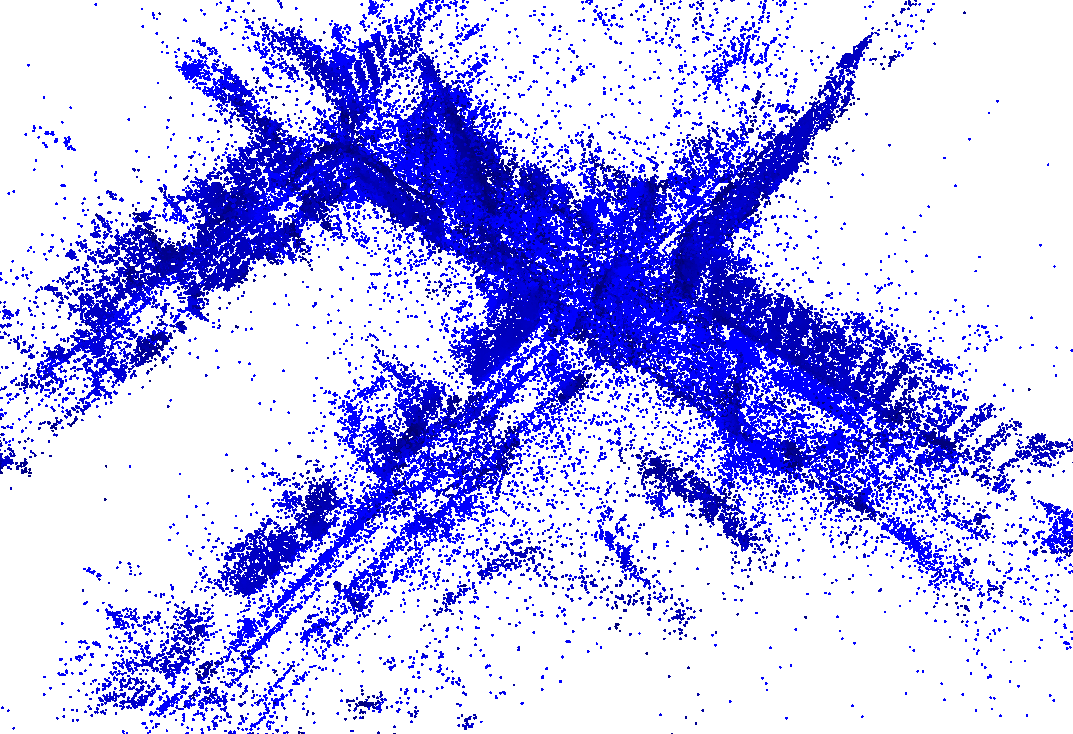}
    \subcaption{\textsl{140000 $3$D landmarks in correspondence to image features in ~\ref{fig:realdata_frame}.}}
    \label{fig:realdata_3d}
\end{minipage}
\begin{minipage}{.32\linewidth}
    \centering
    \includegraphics[height=\linewidth]{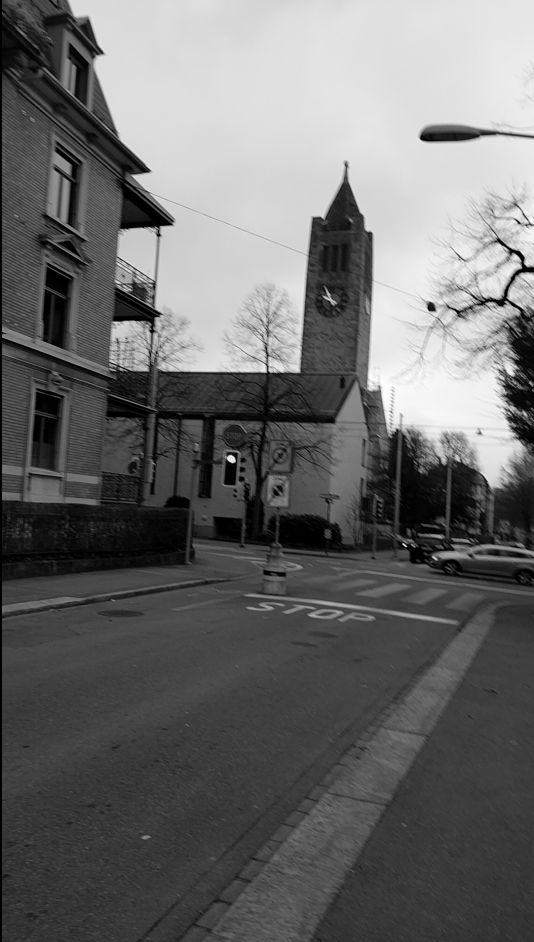}
    \subcaption{\textsl{Query image}.}
    \label{fig:realdata_query}
\end{minipage}
\begin{minipage}{.32\linewidth}
    \centering
    \includegraphics[width=\linewidth]{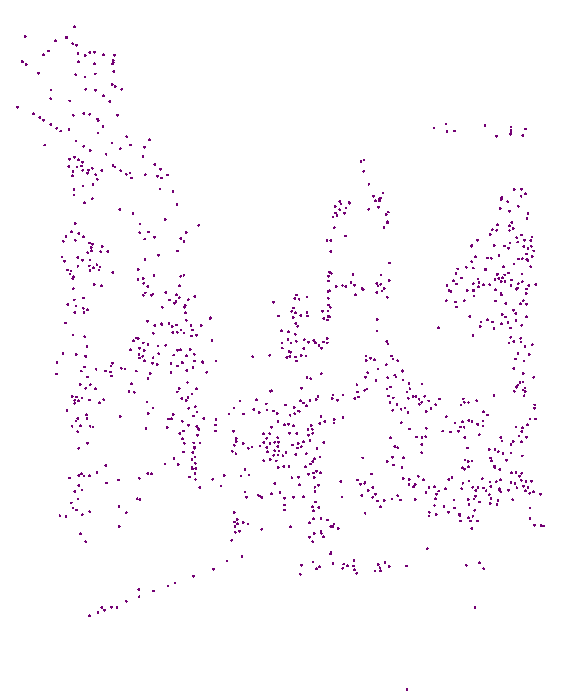}
    \subcaption{\textsl{Corresponding $(\xi,\eta)$ points, each corresponds to many landmarks.}}
    \label{fig:realdata_frame}
\end{minipage}
\begin{minipage}{.32\linewidth}
    \centering
    \includegraphics[width=\linewidth]{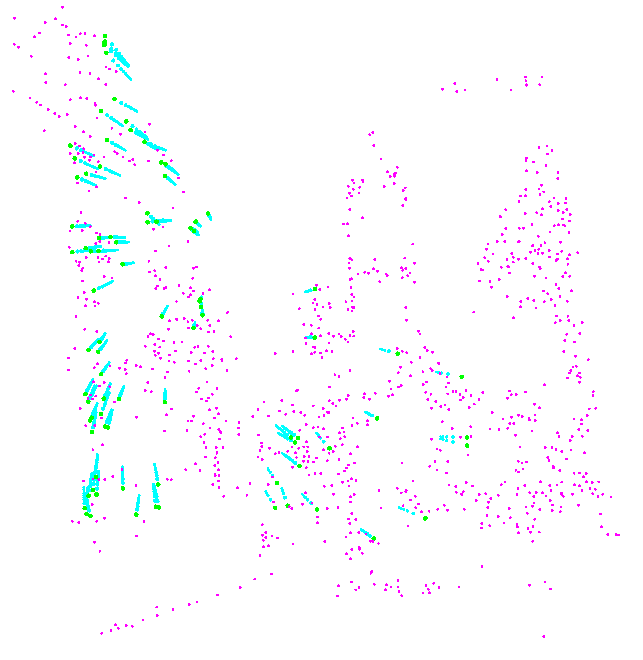}
    \subcaption{\textsl{Inliers ($(\xi,\eta)$ with $\eps$-close projection of landmarks) found by the algorithm (green) along with the rays from matched landmarks}}
    \label{fig:realdata_inliers}
\end{minipage}
\begin{minipage}{.32\linewidth}
    \centering
    \includegraphics[width=\linewidth]{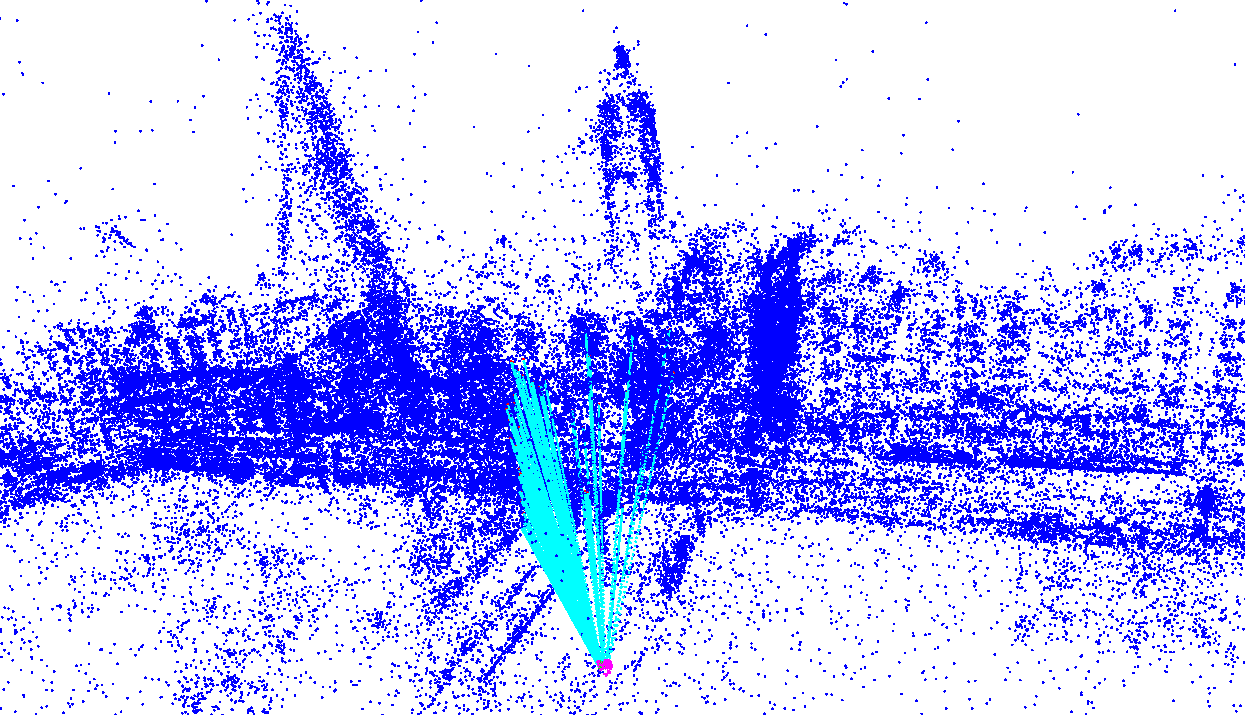}
    \subcaption{\textsl{The pose found by the algorithm (within 3m and 10 degrees from the known true pose) with rays to matched landmarks.}}
    \label{fig:realdata_pose}
\end{minipage}
\caption{Real-world data input and pose}
\label{fig:realdata_input}
\end{figure}

We evaluated the runtime for different problem sizes and checked the correctness of the
camera pose approximation when the size increased. To get different input sizes,
we added random correspondences to a base set of actual correspondences. The number of random correspondences
determines the input size but also the fraction of good correspondences (percentage of
inliers) which goes down with increased input size (the number of inliers in real world cases is typically 10\%). We show the same plots as before in Figure \ref{fig:realdata_plot} and Table
\ref{tab:realdata_poses}.

\begin{figure}[h]
\centering
\includegraphics[width=0.32\linewidth]{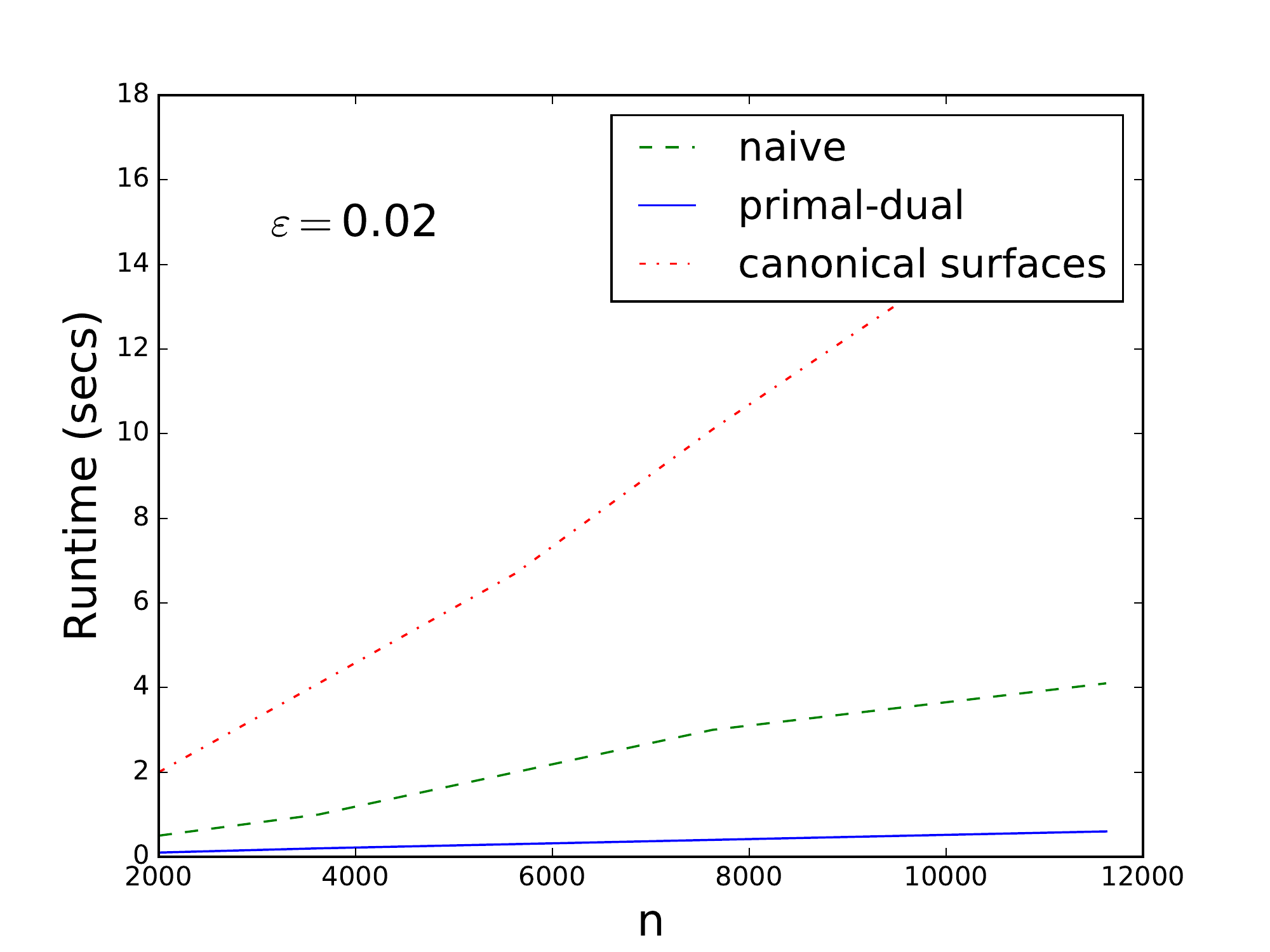}
\caption{Runtime for real-world data with increased $n$ and decreased number of inliers.}
\label{fig:realdata_plot}
\end{figure}

\begin{table}[h!]
\begin{center}
 \begin{tabular}{||c||c c c c||}
 \hline
 n & x & y & z & $\kappa$ \\ [0.5ex]
 \hline\hline
 2000 & 0.26 & 0.62 & 0.06 & 0.72 \\
 \hline
 3626 & 0.36 & 0.56 & 0.12 & 0.52 \\
 \hline
 5626 & 0.38 & 0.60 & 0.12 & 0.62 \\
 \hline
 7626 & 0.40 & 0.64 & 0.08 & 0.57 \\
 \hline
 11626 & 0.43 & 0.68 & 0.13 & 0.63 \\
 \hline\hline
 True pose & 0.37 & 0.59 & 0.06 & - \\
 \hline
\end{tabular}
\end{center}
\captionsetup{justification=centering,margin=2cm}
\caption{Poses computed by the primal-dual algorithms for real-world data (we do not know the actual orientation here).}
\label{tab:realdata_poses}
\end{table}

\section{Future work} \label{sec:conclusion}
%In this work we have proposed to transfer the classical problem of camera pose estimation to a problem of estimating approximate $\eps$-incidences between regular points on a grid (representing pose hypotheses) and surfaces (corresponding to $2$D-$3$D correspondences).
%In this setting the best pose estimation corresponds to the point where most surfaces are $\eps$-close, which closely relates to the classical scoring of pose hypotheses by counting the number of matches that exhibit a reprojection error smaller than a certain threshold.
%In the well established RANSAC-based filtering scheme for pose estimation the run-time is growing with $O(n^2)$.
%In contrast, our approach enables us to compute a camera pose in $O(n)$ (plus some constant), that is our approach is also efficient in the case when the number of correspondences n is large.
%
%We have presented three different algorithms to solve the $\eps$incidence problem.
%The first one simply counts the number of intersections between each surface and grid cell.
%In our second algorithm we propose to use a geometric primal-dual formulation which is asymptotically faster than the simple algorithm for $n \ge 1/\eps^2$.
%For our third algorithm we leverage a data structure that approximates the input surface by a fixed set of canonical surfaces and then recurses on the problem resolution.
%We have formulated and analyzed this technique in its full generality, which is why we believe it should be of general interest, also beyond the application of camera posing.

We note that similar approaches can be applied for computing the relative pose~\cite{nister2004} between two cameras (that look at the same scene), except that the pose estimation then uses $2$D-$2$D matches between the two images (rather than 2D-3D image-to-model correspondences).
Determining the relative motion between images is a prerequisite for stereo depth estimation~\cite{scharstein2002}, in multi-view geometry~\cite{hartley2003}, and for the initialization of the view graph~\cite{sweeney2015vg,zach2010} in SfM, and is therefore an equally important task in computer vision.
In addition, in future work we want to also consider the case of a generalized or distributed camera setup and likewise transform the camera posing problem to an $\eps$-incidence problem.

\pagestyle{empty}
\bibliography{main_journal}
\bibliographystyle{abbrv}

\appendix

\section{Omitted proofs}
\label{proofs}
In all experiments we normalize the data, so that the camera position ($x,y,z$) and the $3$D points lie in the unit box $[0,1]^3$, and the forth parameter ($\kappa$) representing the camera orientation lies in $[-1,1]$.
Let
$\xi_v := F(v;w)$,
We apply the three algorithms above and measure the run-times, where each algorithm is tested for its ability to reach approximately the (known) solution. We remark that the actual implementation may be slowed down by the (constant) cost of some of its primitive operations, but it can also gain efficiency from certain practical heuristic improvements. For example, in contrast to the worst case analysis, we could stop the recursion in the algorithm of Section~\ref{sec:fonseca}, at any step of the octree expansion, whenever the maximum incidence count obtained so far is larger than the number of surfaces crossing a cell of the octree. The same applies for the primal-dual technique in the dual stage. On the other hand, finding whether or not a surface crosses a box in pose space, takes at least the time to test for intersections of the surface with 32 edges of the box, and this constant affects greatly the run-time. The $O(1/\eps^6)$ bound in the canonical surfaces algorithm is huge and has no effect in practice for this problem. For this reason, the overall number of surfaces that we have to consider in the recursion can be very large. The canonical surfaces algorithm in our setting does not change much with $\eps$ because we are far from the second term effect. We show in Figure~\ref{fig:asympth_vs_real}, a comparison of the three algorithms.
Since $v'\in \sigma_\ww$ we have that
$\xi  = F(v';w)$,
$\eta  = G(v';w)$.
We want to show that
$
\fd(v,\ww) = \max \left\{ |\xi_v-\xi|,\; |\eta_v-\eta| \right\} \le \beta \eps
$
for some constant $\beta$ that depends on $a$.

Regarding $F$ and $G$ as functions of $v$, we compute their gradients as follows.
\begin{align*}
F_x & = \frac{ \kappa\left[ (w_1-x)+\kappa(w_2-y)\right] + \left[ (w_2-y)-\kappa(w_1-x)\right] }
{\left( (w_1-x)+\kappa(w_2-y)\right)^2}
= \frac{(1+\kappa^2)(w_2-y)}
{\left( (w_1-x)+\kappa(w_2-y)\right)^2} \\
F_y & = \frac{ - \left[ (w_1-x)+\kappa(w_2-y)\right] + \kappa\left[ (w_2-y)-\kappa(w_1-x)\right] }
{\left( (w_1-x)+\kappa(w_2-y)\right)^2}
= - \frac{(1+\kappa^2)(w_1-x)}
{\left( (w_1-x)+\kappa(w_2-y)\right)^2} \\
F_z & = 0 \\
F_\kappa & = \frac{ -(w_1-x)\left[ (w_1-x)+\kappa(w_2-y)\right] -(w_2-y) \left[ (w_2-y)-\kappa(w_1-x)\right] }
{\left( (w_1-x)+\kappa(w_2-y)\right)^2} \\
& = - \frac{(w_1-x)^2+(w_2-y)^2 } {\left( (w_1-x)+\kappa(w_2-y)\right)^2} ,
\end{align*}
and
\begin{align*}
G_x & = \frac{(w_3-z)(w_1-x)}
{\left((w_1-x)^2+(w_2-y)^2\right)^{3/2}} \\
G_y & = \frac{(w_3-z)(w_2-y)}
{\left((w_1-x)^2+(w_2-y)^2\right)^{3/2}} \\
G_z & = -\frac{1}
{\left((w_1-x)^2+(w_2-y)^2\right)^{1/2}} \\
G_\kappa & = 0 .
\end{align*}
Conditions (i) and (ii) in the lemma, plus the facts that we restrict both
$(w_1,w_2,w_3)$ and $(x,y,z)$ to lie in the bounded domain $[0,1]^3$, and
that $|\kappa|$ is also at most $1$, are then easily seen to imply that the $v$-gradients
$|\nabla F|,\; |\nabla G|$ are at most $\beta$, for some constant $\beta$ that
depends on $a$, and so
$|\xi_v-\xi|  \le \beta |v-v'| \le \beta \eps$ and
$|\eta_v-\eta|  \le \beta |v-v'| \le \beta \eps$,
and the lemma follows.

\medskip
\begin{proof}[Proof of Lemma \ref{reverse-primal}:]Let
\begin{align}
\xi_v & = \frac{(w_2-y) - \kappa (w_1-x)} {(w_1-x) + \kappa (w_2-y)} \label{zpkp} \\
\eta_v & = \frac{w_3-z}{\sqrt{(w_1-x)^2+(w_2-y)^2}} \nonumber .
\end{align}
Since $\fd(v,\ww) \le\eps$ we have that $|\xi_v-\xi|,\;|\eta_v-\eta| \le\eps$.

Since the Equations (\ref{cparw}) are the inverse system of
those of \eqref{cweqs}, we can rewrite (\ref{zpkp}) as
\begin{align*}
z & = w_3 - \eta_v \sqrt{(w_1-x)^2+(w_2-y)^2} \\
\kappa & = \frac{(w_2-y) - \xi_v (w_1-x)} {(w_1-x) + \xi_v (w_2-y)} .
\end{align*}
Hence
\begin{align*}
z-z' & = (\eta - \eta_v) \sqrt{(w_1-x)^2+(w_2-y)^2} \\
\kappa - \kappa' & = \frac{(w_2-y) - \xi_v (w_1-x)} {(w_1-x) + \xi_v (w_2-y)} -
\frac{(w_2-y) - \xi (w_1-x)} {(w_1-x) + \xi (w_2-y)} .
\end{align*}
It follows right away that $|z-z'| \le \sqrt{2}\eps$ (recall that
all the points lie in the unit cube). For the other difference, writing
\[
H(t) =
\frac{(w_2-y) - t (w_1-x)} {(w_1-x) + t (w_2-y)}
\]
(with the other parameters being fixed), we get
\[
|\kappa - \kappa'| \le \max_{t\in [\xi_v,\xi]} |H'(t)| |\xi_v-\xi| .
\]
As is easily verified, we have
\[
H'(t) = - \frac{(w_1-x)^2+(w_2-y)^2}{\left[ (w_1-x) + t (w_2-y) \right]^2} .
\]
Since $\left| (w_1-x) + \xi(w_2-y)\right| \ge a>0$, and $|\xi_v-\xi|\le\eps$,
the denominator of $H'(t)$ is bounded away from zero (assuming that
$\eps$ is sufficiently small), and $|H'(t)| \le c$ for
$t\in[\xi,\xi_v]$, where $c$ is some fixed positive constant. This
implies that $|\kappa - \kappa'| \le c\eps$, and the lemma follows.
\end{proof}

\medskip
%\noindent{\bf Proof of Theorem \ref{correct}.}
\begin{proof}[Proof of Theorem \ref{correct}. Part (a):]
\noindent
%{\bf Part (a):}
Let $(v,\ww)$ be a pair at frame distance $\le\eps$. By Lemma \ref{reverse-primal}
and the definition of $G_{\delta_1}$, there exists a cell $\tau \in G_{\delta_1}$
such that $v\in \tau$ and $\ww \in S_\tau$.

By definition, the surface $\sigma^*_{v;\tau}$ contains the point
\[
(w_1,w_2,w_3,\xi_v-F(c_\tau;w),\eta_v-G(c_\tau;w)) ,
\]
where $\xi_v$ and $\eta_v$ are given by (\ref{zpkp}).
Since $\fd(v,\ww)\le \eps$, the points $(w_1,w_2,w_3,\xi_v-F(c_\tau;w),\eta_v-G(c_\tau;w))$ and
$(w_1,w_2,w_3,\xi_\tau,\eta_\tau)$ lie at $L_\infty$-distance at most $\eps$, therefore
$\sigma^*_v\in V^*_{\tau^*}$ where $\tau^* \in G_{\delta_2}$ is the cell that contains $\ww_\tau$.

Together, these two properties imply that $(v,\ww)$ is counted by the algorithm.
Moreover, since we kept both primal and dual boxes pairwise disjoint, each such pair
is counted exactly once.

\noindent
\medskip
{\bf Part (b):}
Let $(v,\ww)$ be an $\eps$-incident pair that we encounter, where $v$ and $\ww$
are encoded as above. That is, $\sigma_\ww$ crosses the primal cell $\tau$ of
$G_{\delta_1}$ that contains $v$, or a neighboring cell in the $(z,\kappa)$-directions,
and $\sigma^*_{v;\tau}$ crosses the dual cell $\tau^*$ that contains $\ww_\tau$, or a
neighboring cell in the $(\xi_\tau,\eta_\tau)$-directions. This means that $\tau$
(or a neighboring cell) contains a point $c=(x',y',z',\kappa') \in\sigma_\ww$,
and $\tau^*$ (or a neighboring cell) contains a point
$\ww'_\tau=(w_1',w_2',w_3',\xi'_\tau,\eta'_\tau) \in\sigma^*_{v;\tau}$.
The former containment means that
\[
\xi  = \frac{(w_2-y') - \kappa' (w_1-x')} {(w_1-x') + \kappa' (w_2-y')}, \quad\quad
\eta  = \frac{w_3-z'}{\sqrt{(w_1-x')^2+(w_2-y')^2}} ,
\]
and that
\[
|x-x'|, |y-y'| \le \delta_1, \quad\quad
|z-z'| \le 2\sqrt{2}\delta_1, \quad\text{and}\quad
|\kappa-\kappa'| \le 2c\delta_1 .
\]
To interpret the latter containment, we write, using the definition of $\xi'_\tau$, $\eta'_\tau$,
and the fact that $\ww'_\tau \in\sigma^*_{v;\tau}$,
\begin{align*}
\xi'_\tau & = F(v;w') - F(c_\tau;w')
= \frac{(w'_2-y) - \kappa (w'_1-x)} {(w'_1-x) + \kappa (w'_2-y)}
- \frac{(w'_2-y_\tau) - \kappa_\tau (w'_1-x_\tau)} {(w'_1-x_\tau) + \kappa_\tau (w'_2-y_\tau)} \\
\eta'_\tau & = G(v;w') - G(c_\tau;w'_1)
= \frac{w'_3-z}{\sqrt{(w'_1-x)^2+(w'_2-y)^2}}
- \frac{w'_3-z_\tau}{\sqrt{(w'_1-x_\tau)^2+(w'_2-y_\tau)^2}} ,
\end{align*}
where $w'=(w'_1,w'_2,w'_3)$, and where
$c_\tau = (x_\tau,y_\tau,z_\tau,\kappa_\tau)$ is the centerpoint of $\tau$, and
\begin{align*}
\max\left\{ |w_1-w_1'|,\; |w_2-w_2'|,\; |w_3-w_3'| \right\} & = 2\delta_2 \\
\max\left\{ |\xi_\tau-\xi'_\tau|,\; |\eta_\tau-\eta'_\tau| \right\} & = 2\eps ,
\end{align*}
where
\[
\xi_\tau = \xi - F(c_\tau;w) \quad\quad\text{and}\quad\quad
\eta_\tau = \eta - G(c_\tau;w) .
\]
By definition (of $F$, $G$, and the frame distance), we have
\[
\fd(v,\ww) = \max\left\{
\left| \xi - F(v;w) \right|,\;
\left| \eta - G(v;w) \right|
\right\} ,
\]
which we can bound by writing
\begin{align*}
\left| \xi - F(v;w) \right| & \le
\left| \xi - F(v;w') + F(c_\tau;w') - F(c_\tau;w) \right| \\
& + \left| F(v;w) - F(v;w') + F(c_\tau;w') - F(c_\tau;w) \right| \\
& = \left| \xi_\tau - \xi'_\tau \right|
+ \left| F(v;w) - F(v;w') + F(c_\tau;w') - F(c_\tau;w) \right| , \\
\left| \eta - G(v;w) \right| & \le
\left| \eta - G(v;w') + G(c_\tau;w') - G(c_\tau;w) \right| \\
& + \left| G(v;w) - G(v;w') + G(c_\tau;w') - G(c_\tau;w) \right| \\
& = \left| \eta_\tau - \eta'_\tau \right|
+ \left| G(v;w) - G(v;w') + G(c_\tau;w') - G(c_\tau;w) \right| .
\end{align*}
We are given that
\[
\left| \xi_\tau - \xi'_\tau \right| ,\;
\left| \eta_\tau - \eta'_\tau \right| \le 2\eps ,
\]
so it remains to bound the other term in each of the two right-hand sides.
Consider for example the expression
\begin{equation} \label{fders}
F(v;w) - F(v;w') + F(c_\tau;w') - F(c_\tau;w) .
\end{equation}
Write $c_\tau = v+t$ and $w'=w+s$, for suitable vectors $t,s\in\reals^3$.
We expand the expression up to second order, by writing
\begin{align*}
F(v;w') & = F(v;w+s) = F(v,w) + s\cdot \nabla_w F(v;w) + \frac12 s^T H_w(v;w) s \\
F(c_\tau;w) & = F(v+t;w) = F(v,w) + t\cdot \nabla_v F(v;w) + \frac12 t^T H_v(v;w) t \\
F(c_\tau;w') & = F(v+t;w+s) = F(v,w) + s\cdot \nabla_w F(v;w) + t\cdot \nabla_v F(v;w) \\
& + \frac12 s^T H_w(v;w) s + \frac12 t^T H_v(v;w) t + t^T H_{v;w}(v;w) s ,
\end{align*}
where $\nabla_w$ (resp., $\nabla_v$) denotes the gradient with respect to the variables
$w$ (resp., $v$), and where $H_w$ (resp., $H_v$, $H_{v;w}$) denotes the Hessian submatrix
of second derivatives in which both derivatives are with respect to $w$ (resp., both are
with respect to $v$, one derivative is with respect to $v$ and the other is with respect to $w$).

Substituting in (\ref{fders}), we get that, up to second order,
\begin{align*}
| F(v;w) - F(v;w') & + F(c_\tau;w') - F(c_\tau;w) | \\
& = |t^T H_{v;w}(v;w) s| \le \|H_{v;w}(v;w)\|_\infty |t||s| ,
\end{align*}
where $\|H_{v;w}(v;w)\|_\infty$ is the maximum of the absolute values of all the ``mixed'' second derivatives.
(Note that the mixed part of the Hessian of the Hessian arises also in the analysis of the algorithm in
Section \ref{fon:general}.)
% \micha{Connect also in Fonseca's method ?}
Arguing as in the preceding analysis and using the assumptions in the theorem, one can show
that all these derivatives are bounded by some absolute constants, concluding that
\[
| F(v;w) - F(v;w') + F(c_\tau;w') - F(c_\tau;w) | = O(\delta_1\delta_2) = O(\eps) ,
\]
which implies that
\[
\left| \xi - F(v;w) \right| = O(\eps) .
\]
Applying an analogous analysis to $G$, we also have
\[
\left| \eta - G(v;w) \right| = O(\eps) .
\]
Together, these bounds complete the proof of part (b) of the theorem.
\end{proof}

%%%%%%%%%%%%%%%%%%%%%%%%%%%%%%%%%%
\medskip
%\noindent{\bf Proof of Lemma \ref{mixed:hessian}.}
\begin{proof}[Proof of Lemma \ref{mixed:hessian}:]
Fix $j$, consider the function
\[
K_j(\xx) := G_j(\xx;\ss)-G_j(\xx;\t) ,
\]
and recall our assumption that $G_j(\xi;\t) = G_j(\xi;\ss)=\xi_j$.
Then we can also write the left-hand side of (\ref{hdiff}) as $K_j(\xx)-K_j(\xxi)$.
By the intermediate value theorem, it can be written as
\[
K_j(\xx)-K_j(\xxi) = \langle \nabla K_j(\xx') ,\xx-\xxi \rangle ,
\]
for some intermediate value $\xx'$ (that depends on $\ss$ and $\t$).
By definition, we have,
\begin{equation} \label{yyp}
\nabla K_j(\xx') =
\nabla_\xx G_j(\xx';\ss) - \nabla_\xx G_j(\xx';\t) ,
\end{equation}
whose norm is bounded by $c_2|s-t|$ by Condition (ii).
Using the Cauchy-Schwarz inequality, we can thus conclude that
\begin{align*}
\left| G_j(\xx;\ss) - G_j(\xx;\t) \right| & =
\left| K_j(\xx)-K_j(\xxi) \right| \\
& \le |\xx-\xxi|\cdot \left| \nabla_\xx G_j(\xx';\ss) - \nabla_\xx G_j(\xx';\t) \right|  \\
& \le  c_2 |\xx-\xxi| |\t-\ss|  \ ,
\end{align*}
as asserted.
\end{proof}

\medskip
%\noindent{\bf Proof of Lemma \ref{const:snap}.}
\begin{proof}[Proof of Lemma \ref{const:snap}:]
Each surface $\sigma_{\t,\ff'}$ in $\tilde{S}_\tau$ meets $\tau$. That is, there exists a point
$(x_1,\ldots,x_d)$ in $\tau = [0,\delta]^d$ that lies on $\sigma_{\t,\ff'}$, so we have
$\xi_j \le G_j(\xx;\t) + f'_j \le \xi_j + \delta$ for each $j=k+1,\ldots,d$, where $\xx=(x_1,\ldots,x_k)$.
Hence, for some intermediate value $\xx'$, we have
\begin{align*}
\left| f'_j \right| & =
\left| G_j(\xxi;\t) - f'_j - G_j(\xx;\t) + G_j(\xx;\t)- G_j(\xxi;\t) \right|  \\
& \le \left| \xi_j - (f'_j + G_j(\xx;\t))  \right| + \left| G_j(\xx;\t)- G_j(\xxi;\t) \right| \\
& \le \delta + \left| \nabla_\xx G_j(\xx';\t) \right| \cdot |\xx-\xxi| \\
& \le \delta + c_1\delta = (c_1+1)\delta ,
\end{align*}

where the first inequality follows by the triangle inequality, the second follows since
$\left| \xi_j - (f'_j + G_j(\xx;\t))  \right|\le \delta$, the third by the intermediate value theorem and the Cauchy-Schwarz inequality, and the fourth
by Condition (i).
\end{proof}

%%%%%%%%%%%%%%%%%%%%%%%%%%%%%%%
\section{A simple algorithm} \label{sec:simple}

We  present a simple naive solution which does not require any of the sophisticated machinery developed in this work. It actually turns out to be the most efficient solution when $n$ is small.

We construct a grid $G$ over $Q=[0,1]^3\times[-1,1]$, of cells $\tau$, each of
dimensions $\eps \times \eps \times 2\sqrt{2}\eps \times 2c\eps$,
where $c$ is the constant of Lemma \ref{reverse-primal}. (We use this non-square grid $G$ since we want to find $\eps$-approximate incidences in
terms of frame distance.)
For each  cell $\tau$ of $G$ we compute the number of surfaces $\sigma_\ww$ that intersect $\tau$.

Consider now a shifted version $G'$ of $G$ in which the vertices of $G'$ are the centers of
the cells of $G$.
To report how many surfaces are within frame distance $\eps$ from a vertex $q\in G'$,
we return the count of the cell of $G$ whose center is $q$.
By Lemma \ref{reverse-primal}
and Lemma \ref{xetau}, this includes all surfaces at frame distance $\eps$ from $q$,
but may also count surfaces at frame distance at most $\sqrt{10+4c^2}\beta\eps$ from $q$,
where $\beta$ is the constant in Lemma \ref{xetau}. (The distance from $q$ to the farthest  corner of
 its cell is  $\sqrt{1^2 + 1^2 + (2c)^2 + (2\sqrt{2})^2}\eps=\sqrt{10+4c^2}\beta\eps$.)

It takes $O(\frac{n}{\eps^2})$ time to construct this data structure. Indeed, cell
 boundaries reside on $O(\frac{1}{\eps})$ hyperplanes, so we
compute the intersection curve of
 each surface with each of these hyperplanes, in a total of $O(\frac{n}{\eps})$  time. Then, for each such curve we find the cell boundaries
 that it intersects  within its three-dimensional hyperplane
in $O(\frac{1}{\eps})$ time.
We summarize this result in the following theorem.

\begin{theorem}
The algorithm described above
approximates the the number of surfaces that are at distance $\eps$ to each vertex $q\in G'$
where $G'$ is  an $\eps \times \eps \times 2\sqrt{2}\eps \times 2c\eps$ grid in
 $O(\frac{n}{\eps^2})$. (The approximation is in the sense defined above.)
\end{theorem}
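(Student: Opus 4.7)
The plan is to split the argument into two independent parts: (a) the correctness of the approximation and (b) the running time bound. Both parts should follow rather directly from the preparatory lemmas, so the main challenge is bookkeeping the constants and checking that the asymmetric cell dimensions $\eps \times \eps \times 2\sqrt{2}\eps \times 2c\eps$ are exactly the right shape.

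For the correctness, I would argue both inclusions. Fix a vertex $q \in G'$, which is the center of some cell $\tau$ of $G$, and consider a surface $\sigma_\ww$. For the ``no false negatives'' direction, if $\fd(q,\ww)\le\eps$, then Lemma~\ref{reverse-primal} produces a point $v'=(x,y,z',\kappa')\in\sigma_\ww$ sharing the $x,y$-coordinates of $q$, with $|z-z'|\le\sqrt{2}\eps$ and $|\kappa-\kappa'|\le c\eps$. Since $q$ is the center of $\tau$ and the half-widths of $\tau$ in the $z$- and $\kappa$-directions are precisely $\sqrt{2}\eps$ and $c\eps$, the point $v'$ lies in $\tau$, so $\sigma_\ww$ intersects $\tau$ and is counted. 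For the ``no far false positives'' direction, if $\sigma_\ww$ intersects $\tau$, pick any $v'\in\sigma_\ww\cap\tau$; then $|q-v'|$ is at most the half-diagonal of $\tau$, which is $O(\eps)$ with a constant depending on $c$. Applying Lemma~\ref{xetau} (noting that conditions (i), (ii) are assumed to hold) yields $\fd(q,\ww) = O(\eps)$. The precise constant emerges from the product of the half-diagonal of $\tau$ and the constant $\beta$ of Lemma~\ref{xetau}, giving the claimed $\sqrt{10+4c^2}\beta\eps$ bound.

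For the running time, I would count intersections by slicing. The boundaries of the cells of $G$ lie on $O(1/\eps)$ axis-parallel hyperplanes (one family per coordinate, with appropriate spacings). For each of the $n$ surfaces $\sigma_\ww$, I intersect it with every one of these hyperplanes; since $\sigma_\ww$ is a two-dimensional constant-degree algebraic surface, each such intersection is a constant-complexity curve in three-dimensional space, and there are $O(n/\eps)$ such curves in total. Each such curve can then be traced through the cells of the three-dimensional grid induced on the hyperplane in $O(1/\eps)$ time, since a one-dimensional curve crosses $O(1/\eps)$ of these finer cells. Multiplying, the total intersection-computation cost is $O(n/\eps^2)$, and storing per-cell counts in a hash table adds only $O(1)$ overhead per intersection event. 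This dominates the $O(1/\eps^4)$ cost of initializing cells (which are only touched if non-empty).

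The main obstacle I anticipate is purely organizational: one has to verify that a surface crossing a cell really does produce the corresponding intersection pattern with the slicing hyperplanes (i.e., the curve-tracing step captures all cells $\sigma_\ww$ meets, not only those where $\sigma_\ww$ pierces a $3$-face). This is standard for generic position but needs care at boundaries, and can be handled by inflating tested cells by a negligible margin or by a sweep on each curve. The constant factors in the approximation guarantee depend on $c$ (from Lemma~\ref{reverse-primal}) and $\beta$ (from Lemma~\ref{xetau}), both in turn depending on the constant $a$ in the genericity conditions; tracking these yields the explicit bound $\sqrt{10+4c^2}\beta\eps$ stated in the theorem's preceding paragraph.
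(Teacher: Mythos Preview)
Your proposal is correct and follows essentially the same route as the paper. The paper's proof is terse---it just says correctness follows from Lemmas~\ref{reverse-primal} and~\ref{xetau} and that the running time follows since only $O(n/\eps^2)$ cells are crossed---but the preceding discussion in the appendix spells out exactly your slicing argument (intersect each surface with $O(1/\eps)$ boundary hyperplanes, then trace each resulting curve through $O(1/\eps)$ cells), and your two-direction correctness argument is precisely the intended use of the two lemmas.
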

\begin{proof}
Correctness follow from Lemmas \ref{reverse-primal} and \ref{xetau}, the running time follows
since there are only $O(\frac{n}{\eps^2})$ cells of $G$ that at least one surface intersects.
\end{proof}

In fact we can find for each vertex $q$ of  $G'$, the {\em exact}
 number of $\eps$-incident surfaces (i.e.\ surfaces at distance
at most $\eps$ from $q$). For this we keep with each cell $\tau$ of $G$, the list of the surfaces that intersect $\tau$.
Then for each vertex $q \in G'$ we traverse the surfaces stored in its cell and check which of them is within frame distance
$\eps $ from $q$. The asymptotic running time is still $O(\frac{n}{\eps^2})$.

If we want to get an  incidences counts of vertices of   a finer grid that $G$, we use a union of several shifted grids as above.
This also allows to construct a data structure that can return an  $\eps$-incidences count of any query point.

For the camera pose problem we use the
 vertex of $G_\eps$ of largest $\eps$-incidences count as the position of the camera.

%%%%%%%%%%%%%%%%%%%%%%%%%%%%%%%

\section{Geometric proximity via canonical surfaces: The case of hyperplanes} \label{fon:hyperplanes}
We have a set $H$ of $n$ hyperplanes in $\reals^d$
that cross the unit cube $\tau_0=[0,1]^d$, and a given error parameter $\eps$.
Each hyperplane $h \in H$ is given by an equation of the form
$x_d = \sum_{i=1}^{d-1} a_ix_i + b$.
We assume, for simplicity, that $|a_i|\le 1$ for each $h\in H$ and for each $i=1,\ldots,d-1$.
Moreover, since $h$ crosses $\tau_0$, we have $|b|\le d$, as is easily checked. (This can always
be enforced by rewriting the equation turning the $x_i$ with the coefficient $a_i$
of largest absolute value into the independent coordinate.)

For our rounding scheme we define $\eps' = \eps/\log(1/\eps)$.
We discretize each hyperplane $h\in H$ as follows. Let the equation of $h$ be
$x_d = \sum_{i=1}^{d-1} a_ix_i + b$.
We replace each $a_i$ by the integer multiple of $\eps'/d$ that is nearest to it,
and do the same for $b$. Denoting these `snapped' values as $a'_i$ and $b'$,
respectively, we replace $h$ by the hyperplane $h'$, given by
$x_d = \sum_{i=1}^{d-1} a'_ix_i + b'$.
For any $\xx=(x_1,\ldots,x_{d-1})\in[0,1]^{d-1}$, the $x_d$-vertical
distance between $h$ and $h'$ at $\xx$ is
\[
\left| \left( \sum_{i=1}^{d-1} a_ix_i + b\right) -
\left( \sum_{i=1}^{d-1} a'_ix_i + b'\right) \right|
\le \sum_{i=1}^{d-1} |a_i-a'_i|x_i + |b-b'|
\le \sum_{i=1}^{d-1} |a_i-a'_i| + |b-b'| \le \eps' .
\]
We define the {\em weight} of each
canonical hyperplane to be the number of original hyperplanes that got rounded to it, and  we refer to the set of
all canonical hyperplanes by $H^c$.

We describe a recursive procedure that approximates
the number of $\eps$-incident hyperplanes of $H$ to each  vertex of a $(4\eps)$-grid $G$ that tiles up $[0,1]^d$.
Specifically, for
 each vertex $v$ of $G$ we report a count that includes all  hyperplanes
 in $H$ that are at Euclidean distance at most $\eps$ from $v$ but it may also count hyperplanes of $H$ that are at distance up to
$(2\sqrt{d}+1)\eps$ from $v$.

Our procedure  constructs an octree decomposition of $\tau_0$, all the way to subcubes
of  side length $4\eps$. (We assume that $4\eps$ is a negative power of $2$ to avoid rounding issues.)
We shift  the grid $G$ such that its vertices are centers of these leaf-subcubes.
At level $j$ of the recursive construction, we have  subcubes $\tau$
of side length $\delta=1/2^j$.
For each such $\tau$ we construct
a set $H_\tau$ of more coarsely  rounded hyperplanes.
The weight of each
 hyperplane $h$ in  $H_\tau$
is the sum of the
weights of the hyperplanes in the parent cube $p(\tau)$ of $\tau$ that got rounded to $h$, which, by induction, is the number of
 original hyperplanes that are rounded to it (by repeated rounding along the path in the recursion tree leading to $\tau$).

At the root, where $j=0$, we set $H_\tau = H^c$ (where each $h\in H_\tau$ has the initial weight of the number
of original hyperplanes rounded to it, as described above). At any other cell $\tau$ we
obtain $H_\tau$ by applying a rounding step to the set $\tilde{H}_\tau$ of
the hyperplanes of $H_{p(\tau)}$ that intersect $\tau$.

The coarser
discretization of the hyperplanes of $\tilde{H}_\tau$
that produces the set $H_\tau$ proceeds as follows.
Let $(\xi_1,\ldots,\xi_d)$ denote the coordinates of the corner
of $\tau$ with smallest coordinates,
so $\tau = \prod_{i=1}^d [\xi_i,\xi_i+\delta]$.

Let $h$ be a hyperplane of $H_\tau$, and rewrite its equation as
\[
x_d - \xi_d = \sum_{i=1}^{d-1} a_i(x_i - \xi_i) + b \ .
\]
This rewriting only changes the value of $b$ but does not
affect the $a_i$'s. Since $h$ crosses $\tau$, we have
$|b|\le d\delta$ (and $|a_i|\le 1$ for each $i$).
We now re-discretize each coefficient $a_i$ (resp., $b$) to the integer
multiple of $\frac{\eps'}{d\delta}$ (resp., $\frac{\eps'}{d}$) that is nearest to it.
Denoting these snapped values as $a'_i$ and $b'$, respectively,
we replace $h$ by the hyperplane $h'$ given by
\[
x_d - \xi_d = \sum_{i=1}^{d-1} a'_i(x_i - \xi_i) + b' .
\]
This re-discretization of the coefficients $a_i$ is a
coarsening of the discretization of the hyperplanes in $\tilde{H}_\tau$.
The set $H_\tau$ contains all the new, more coarsely rounded hyperplanes that we obtain from the hyperplanes in
$\tilde{H}_\tau$ in this manner.
Note that several hyperplanes in $\tilde{H}_\tau$ may be rounded to the same hyperplane in $H_\tau$.
We set the weight of each hyperplane in $H_\tau$ to be the sum of the weights of the hyperplanes in $\tilde{H}_\tau$
that got rounded to it.
(Note that although every hyperplane of $\tilde{H}_\tau$ crosses $\tau$, such an $h$ may get rounded to a hyperplane
that misses $\tau$, in which case it is not represented by any hyperplane in $H_\tau$.)

For any $\xx=(x_1,\ldots,x_{d-1})\in \prod_{i=1}^{d-1} [\xi_i,\xi_i+\delta]$,
the $x_d$-vertical distance between $h$ and $h'$ at $\xx$ is
\begin{align*}
& \Bigl| \Bigl( \xi_d + \sum_{i=1}^{d-1} a_i(x_i-\xi_i) +b \Bigr) -
\Bigl( \xi_d + \sum_{i=1}^{d-1} a'_i(x_i-\xi_i) +b' \Bigr) \Bigr| \\
& \le \sum_{i=1}^{d-1} |a_i-a'_i|(x_i-\xi_i) + |b-b'| \le
\sum_{i=1}^{d-1} |a_i-a'_i| \delta + |b-b'| \le \eps' .
\end{align*}
Since the original value of $a_i$ is in $[-1,1]$ and we round it to an integer  multiple of
$\frac{\eps'}{d\delta}$, the hyperplanes in $H_\tau$ have $O(\frac{\delta}{\eps'})$ possible values for each coefficient $a_i$. Furthermore, these hyperplanes also have  $O(\frac{\delta}{\eps'})$ possible values for $b$,
 because $ |b|\le \delta d$ for every hyperplane in
$\tilde{H}_\tau$ (since it intersects $\tau$).
It follows that $|H_\tau| = O\left(\left(\frac{\delta}{\eps'}\right)^d\right)$, and
 the total size of
all sets $H_\tau$, over all cells  $\tau$ at the same level of the octree, is
$O\left(\left(\frac{1}{\eps'}\right)^d\right)$.

Finally, at every leaf
$\tau$ of the octree we report the sum of the weights of the hyperplanes in $H_\tau$ as the
approximate $\eps$-incidences
  count of the vertex of $G$ at the center of $\tau$.

\begin{theorem} \label{thm:fon-hyper}
Let $H$ be a set of $n$ hyperplanes
in $\reals^d$ that cross the unit cube $[0,1]^d$,
and let $G$ be the  $(4\eps)$-grid within $[0,1]^d$.
The algorithm described above reports for each vertex $v$ of $G$ an approximate $\eps$-incidences count that includes all
hyperplanes at Euclidean distance at most $\eps$ from $v$ and may include some hyperplanes at distance at most $(2\sqrt{d}+1)\eps$ from $v$.
The running time of this algorithm is $O\left(n+\frac{(\log(1/\eps))^{d+1}}{\eps^d}\right)$.
\end{theorem}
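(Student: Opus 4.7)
}
The plan is to establish a single \emph{accumulated vertical error} estimate and deduce both correctness directions (completeness and soundness) from it, and then separately bound the running time by summing the sizes of the $H_\tau$.

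\medskip
\noindent\textbf{Error accumulation.}
First I would prove the following key inequality: for every leaf cube $\tau_{\mathrm{leaf}}$ of side $4\eps$, every surviving rounded hyperplane $h_{\mathrm{leaf}}\in H_{\tau_{\mathrm{leaf}}}$, and every point $u\in\tau_{\mathrm{leaf}}$, the vertical distance at $u$ between the original hyperplane $h\in H$ (at the start of the rounding chain that leads to $h_{\mathrm{leaf}}$) and $h_{\mathrm{leaf}}$ is at most
\[
(1+\log(1/\eps))\eps' \;\le\; \eps,
\]
by choice of $\eps'=\eps/\log(1/\eps)$. The per-step bound of $\eps'$ is exactly the calculation already carried out in the text (initial canonicalization, plus the shift introduced in each descending cube). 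Summing over the $\log(1/\eps)$ levels on the path from the root to $\tau_{\mathrm{leaf}}$, and using the fact that $u$ lies in every cube on this path, gives the stated global bound.

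\medskip
\noindent\textbf{Soundness (the $(2\sqrt{d}+1)\eps$ claim).}
If an original $h\in H$ is included in the approximate count at the center $v$ of a leaf $\tau_{\mathrm{leaf}}$, then its final rounded image $h_{\mathrm{leaf}}$ crosses $\tau_{\mathrm{leaf}}$, so it contains some point $p\in\tau_{\mathrm{leaf}}$. Since the circumradius of $\tau_{\mathrm{leaf}}$ is $2\sqrt{d}\eps$, we have $|v-p|\le 2\sqrt{d}\eps$. Applying the error estimate at $u=p$ and setting $p'=(p_1,\ldots,p_{d-1},\,h(p_1,\ldots,p_{d-1}))$ on $h$ yields $|p-p'|\le\eps$, so the Euclidean distance from $v$ to $h$ is at most $|v-p|+|p-p'|\le(2\sqrt{d}+1)\eps$, as required.

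\medskip
\noindent\textbf{Completeness (every hyperplane at distance $\le\eps$ is counted).}
Suppose $h\in H$ satisfies $\mathrm{dist}(v,h)\le\eps$, and let $q$ be the foot of the perpendicular from $v$ to $h$, so $|v-q|\le\eps$. Because $v$ is the center of $\tau_{\mathrm{leaf}}$ (side $4\eps$), $v$ is at distance $\ge 2\eps$ from the boundary of every ancestor cube $\tau_j\supseteq\tau_{\mathrm{leaf}}$, hence $q\in\tau_j$ and even lies at distance $\ge\eps$ from $\partial\tau_j$. I need to argue that the rounded image $h_j$ at level $j$ still meets $\tau_j$, so it is not discarded. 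By the error estimate applied at $q$, the vertical distance between $h$ and $h_j$ at $q$ is at most $j\eps'\le\eps$; thus the point $q_j$ obtained from $q$ by changing only its $x_d$-coordinate to $h_j(q_1,\ldots,q_{d-1})$ lies on $h_j$ and satisfies $|q-q_j|\le\eps$. Since $q$ has $\ge\eps$ slack to $\partial\tau_j$, we conclude $q_j\in\tau_j$, so $h_j$ crosses $\tau_j$ and survives to the next level. By induction, $h_{\mathrm{leaf}}\in H_{\tau_{\mathrm{leaf}}}$ and the contribution of $h$ is counted. The main obstacle is precisely this surviving-the-filter step; everything rests on the side length $4\eps$ of a leaf being large enough (relative to the accumulated error $\eps$) to absorb the rounding drift.

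\medskip
\noindent\textbf{Running time.}
The initial canonicalization of $H$ into $H^c$ is $O(n)$. For the octree phase, the already-derived bound $|H_\tau|=O((\delta/\eps')^d)$, together with $(1/\delta)^d$ cubes at level $j$ where $\delta=1/2^j$, gives a per-level total of $O(1/(\eps')^d)$. Summing over the $O(\log(1/\eps))$ levels of the octree yields $O(\log(1/\eps)/(\eps')^d)=O((\log(1/\eps))^{d+1}/\eps^d)$, and adding the $O(n)$ term produces the claimed running time. (The work at each cube $\tau$ is charged to producing the rounded set $H_\tau$ from $\tilde H_\tau\subseteq H_{p(\tau)}$, which can be organized so that the cost is proportional to $|H_\tau|$ plus the filtering work against the parent list.)
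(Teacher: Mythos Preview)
Your proposal is correct and follows essentially the same route as the paper: establish a cumulative $\le\eps$ vertical-error bound along the rounding chain, then use it together with the $4\eps$ side length of the leaves to get completeness (the rounded chain never falls outside the ancestor cubes) and soundness (triangle inequality with the $2\sqrt{d}\eps$ circumradius), and finally sum $|H_\tau|=O((\delta/\eps')^d)$ over all levels for the running time.

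Two minor bookkeeping points to tighten. First, the inequality $(1+\log(1/\eps))\eps'\le\eps$ is literally false since $\eps'=\eps/\log(1/\eps)$; what saves you is that the octree has only $\log(1/(4\eps))=\log(1/\eps)-2$ levels, so the total number of rounding steps (including the initial canonicalization) is at most $\log(1/\eps)-1$, giving accumulated error strictly below $\eps$. Second, in your soundness step you assume $h_{\mathrm{leaf}}$ itself crosses $\tau_{\mathrm{leaf}}$, but the algorithm only guarantees that its pre-image in $\tilde H_{\tau_{\mathrm{leaf}}}$ does; since the last rounding moves it by at most $\eps'$ vertically, this costs you nothing in the final $(2\sqrt{d}+1)\eps$ bound, but the argument should go through the pre-image (as the paper does implicitly). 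Similarly, in completeness the survival condition at level $j$ is that $h_{j-1}$ (not $h_j$) crosses $\tau_j$; your witness-point argument via $q$ works verbatim for $h_{j-1}$ with a slightly smaller accumulated error, so the induction closes.
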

\begin{proof}
Let $v\in G$, and consider a hyperplane $h\in H$ at distance at most $\eps$ from $v$.
The hyperplane $h$ is rounded to a hyperplane $h'\in H^c$ which is at  distance at most $\eps'=\frac{\eps}{\log(1/\eps)}$ from $h$,\footnote{The distance between two hyperplanes is defined to be the maximum vertical distance between them.} and thereby at
distance at most $\eps +\eps'$ from $v$. The hyperplane $h'$ is further rounded to other hyperplanes while
propagating down the octree. The distance from $h'$
from the hyperplane that it is rounded to in $H^c$ is at most ${\eps}{\log(1/\eps)}$, and, in general the distance of $h'$ from
 any hyperplane
$h_j$, that it is rounded to at any level $j$, is  at most
$(j+1)\eps'=\frac{(j+1)\eps}{\log(1/\eps)}$.
(Note that $h'$ is rounded to different hyperplanes in different cells of level $j$.)
Therefore the distance of $h_j$ from $v$ is
at most $\eps+\frac{(j+1)\eps}{\log(1/\eps)}\le 2\eps$ (since $j+1\le \log(1/\eps)$). It follows that $h'$ is rounded
to some hyperplane that crosses the cell that contains $v$, at each level of the octree.
In particular $h'$ is (repeatedly) rounded to some hyperplane at the leaf containing $v$ and  is included
in the weight of some hyperplane at that leaf.

Consider now a hyperplane $h\in H$ that is rounded to some hyperplane $h^\ell$ at the leaf $\tau$ containing $v$.
The hyperplane $h^\ell$ is at distance at most $\eps$ from   $h$.
Therefore $h$ is at distance at most $\eps$ from the boundary of the leaf-cell containing $v$.
The distance of $v$ to the boundary of the leaf-cell containing it is at most $2\sqrt{d}\eps$,
so the distance of $h$ from $v$ is at most $(2\sqrt{d}+1)\eps$.

The running time follows from the fact that the total size of the sets $H_{\tau}$ for all
cells $\tau$ at a particular level of the quadtree is $O\left(\frac{(\log(1/\eps))^{d}}{\eps^d}\right)$ and there are
$\log(1/(4\eps))$ levels.
\end{proof}

Note that if we consider an arbitrary point $p$ then each
hyperplane at distance at most $\eps$ from $p$ is included in the approximate count of at least one of the
 vertices of the grid $G$ surrounding $p$.
In this rather weak sense, the  largest approximate incidences count of a  vertex  of $G$
can be considered as an approximation to the number of $\eps$-close hyperplanes to
 the point $p\in \reals^d$ with the largest number of $\eps$-close hyperplanes.

Our octree data structure can give an
approximate  $\eps$-incidences count for any query point $q$ (albeit with somewhat worse constants).
For this we construct a constant number of octree structures over $5^d$ shifted (by intergral multiple of $\eps$) grids of
a somewhat larger side-length, say $5\eps$.
The grids are shifted such that each cell $c$ of a finer  grid of side length $\eps$ is centered in
a  larger grid cell of one of our grids, say $G_c$. We use $G_c$ to answer queries $q$ that lie in $c$,
by returning the sum of the weights of the hyperplanes in $h_\tau$ where $\tau$ is the leaf of $G_c$ containing $q$.

We can also modify this data structure such that it can answer
$\eps$-incidences queries {\em exactly}. That is, given a query point $q$, it can count (or report) the number of
hyperplanes at distance at most $\eps$ from $q$ and only these hyperplanes.
To do this we  maintain pointers from each hyperplane $h$ in $H_\tau$ to the hyperplanes
in $H_{p(\tau)}$ that
got rounded to $h$.
To answer a query $q$, we find the leaf cell $\tau$ containing $q$ and then we traverse back the pointers of the hyperplanes of
$H_{\tau}$ all the way up the octree to identify the original hyperplanes that were rounded to them.
We then traverse this set of original hypeprlanes and count (or report) those that are at distance at most $\eps$ from $q$.

\end{document}